\newtheorem{thm}{Theorem}
\newtheorem{cor}{Corollary}
\newtheorem{prop}[thm]{Proposition}
\newtheorem{defn}[thm]{Definition}
\newtheorem{rem}{Remark}
\begin{document}

%\title{Wireless Sensing Enabled by Coordinated UAV Swarms: From Deterministic to Disordered Modeling }

\title{Swarm Antenna Arrays: From Deterministic to Stochastic Modeling
\\ 
\thanks{This work is supported by the National Natural Science Foundation of China under Grant 12141107, the Key Research and Development Program of Wuhan under Grant 2024050702030100, and the Interdisciplinary Research Program of HUST (2023JCYJ012).}
\thanks{T. Mi, M. Feng, R. Shao and R. Qiu are with the School of Electronic Information and Communications, Huazhong University of Science and Technology, Wuhan 430074, China (e-mail: mitiebin@hust.edu.cn; feng\_miyu@hust.edu.cn; shaoruichu@hust.edu.cn; caiming@hust.edu.cn).}
\thanks{C. Zeng is with the National Key Laboratory of Radar Signal Processing, Xidian University, Xi'An 710071, China. (e-mail: czeng@mail.xidian.edu.cn)}
}

\author{\IEEEauthorblockN{Tiebin Mi,~\IEEEmembership{Member,~IEEE,} 
Miyu Feng,~\IEEEmembership{Student Member,~IEEE,}
Ruichu Shao,~\IEEEmembership{Student Member,~IEEE,} \\
Cao Zeng,~\IEEEmembership{Member,~IEEE,} 
and Robert Caiming Qiu,~\IEEEmembership{Fellow,~IEEE}} \\
}

\maketitle

\begin{abstract}
Swarm antenna arrays, composed of spatially distributed antennas mounted on unmanned agents, offer unprecedented flexibility and adaptability for wireless sensing and communication. However, their reconfigurable architecture, susceptibility to collisions, and inherently stochastic nature present significant challenges to realizing collaborative gain. It remains unclear how spatial coordination, positional perturbations, and large-scale topological configurations affect coherent signal aggregation and overall system performance. This paper investigates the feasibility of achieving coherent beamforming in such systems from both deterministic and stochastic perspectives. First, we develop a rigorous theoretical framework that characterizes the necessary and sufficient conditions for the emergence of grating lobes in multiple linear  configurations. Notably, we show that for dual linear arrays, the classical half-wavelength spacing constraint can be safely relaxed without introducing spatial aliasing. This result challenges traditional array design principles and enables more flexible, collision-aware topologies. Second, we present a theoretical analysis, supported by empirical validation, demonstrating that coherent gain can be approximately preserved under realistic positional perturbations. Our results reveal that spatial perturbations introduce measurable degradation in the main lobe, an effect that cannot be mitigated merely by increasing the number of antennas. Instead, the primary benefit of scaling lies in reducing the variance of perturbation-induced fluctuations. Finally, we examine the emergent deterministic behavior of large-scale disordered arrays by analyzing the spectral properties of the associated Euclidean random matrices. Despite positional randomness, the spectral distributions are shown to asymptotically converge to deterministic limits, thereby revealing inherent structural regularities within disordered configurations. Together, these findings offer new theoretical foundations and practical design insights for enabling advanced functionalities in swarm antenna arrays.
\end{abstract}

\begin{IEEEkeywords}
  Swarm antenna arrays, array response, beamforming, phase compensation, perturbation, Euclidean random matrix.
\end{IEEEkeywords}

\section{Introduction}\label{S:Introduction}

\IEEEPARstart{U}{nmanned} agents have emerged as transformative enablers for target detection and wireless communication due to their inherent flexibility, mobility, and adaptability \cite{zeng2016wireless, mozaffari2019tutorial}. Recent advancements highlight the capabilities of unmanned aerial vehicles (UAVs) as versatile platforms for wireless sensing and communication tasks \cite{mu2023uav, meng2023uav}. By exploiting their mobility, Unmanned agents can dynamically establish favorable line-of-sight (LoS) links \cite{sabzehali20213d}, thereby enabling efficient data relaying \cite{zeng2016throughput, zhan2011wireless}, channel probing \cite{khuwaja2018survey, mao2024survey}, and wide-area coverage \cite{li2019near}. In sensing applications, UAVs have been employed in various tasks such as source localization \cite{li2021multiple, zhang2022efficient} and radio environment mapping \cite{liu2023uav, zeng2021simultaneous}, often using techniques such as received signal strength fingerprinting \cite{soltani2020rf}, time-of-arrival \cite{zhang2025cooperative}, and angle-of-arrival \cite{jiang2020novel} estimation. On the communication front, UAV-mounted base stations have been extensively studied for providing temporary wireless coverage in disaster recovery and rural connectivity scenarios \cite{zhang20213d, lyu2016placement}.

The cost-effectiveness and operational simplicity of unmanned agents facilitate their deployment in swarm configurations \cite{mozaffari2017wireless, yuan2018ultra}. In the context of wireless sensing and communication, a particularly promising paradigm is the swarm antenna array, in which each unmanned agent carries a dedicated antenna element, collectively forming a large-scale, distributed, and coordinated network \cite{namin2012analysis, mozaffari2018communications, diao2022unmanned, diao2024experimental, zhang2024uav}. Swarm antenna arrays represent a fundamentally new class of reconfigurable architectures and extend the capabilities of multiple-input multiple-output (MIMO) systems.

Two primary methodological approaches can be employed to investigate the collaborative gain enabled by swarm antenna arrays, depending on the specific task and operational context. In communication-oriented scenarios, swarm-based wireless systems are typically regarded as extensions of classical MIMO architectures (e.g., \cite{mozaffari2017wireless, mozaffari2018communications, zhang2024uav}). This perspective leverages well-established tools from wireless communications to analyze system capacity, perform channel estimation, and evaluate overall network performance. The second approach adopts a physical-layer-centric view, treating the swarm antenna array as a tightly coupled system and focusing on intra-array coherent gain at the wavelength scale (e.g., \cite{namin2012analysis, diao2022unmanned, tuzi2023satellite, diao2024experimental}). Analytical methods from electromagnetic theory, array signal processing, and statistical modeling are frequently employed to examine how the swarm topology influences the collective system behavior.

Compared to the communication-oriented approach, the physics-based methodology adopts a lower-level, first-principles perspective grounded in fundamental theory \cite{rocca2016unconventional, nanzer2021distributed}. It directly analyzes how spatial topology influences one of the most fundamental performance metrics---the radiation pattern of the entire antenna array---thereby facilitating a deeper understanding of the deterministic and statistical behaviors that arise from large-scale configurations.

In this paper, we investigate the feasibility of achieving coherent beamforming using swarm antenna arrays. In conventional array processing theory, a well-established and fundamental principle for uniform linear or planar arrays is the half-wavelength spacing criterion, which serves as a stringent guideline to suppress the emergence of grating lobes and to mitigate angular ambiguities across the entire field of view \cite{van2002optimum}. Considering the risk of inter-agent collisions in practical agents deployments, a fundamental yet unexplored question arises: can the half-wavelength spacing constraint be relaxed without compromising beamforming performance?

A second critical issue involves positional perturbations of the antenna elements. Unlike traditional arrays with fixed and precisely calibrated topologies, swarm antenna arrays are deployed on spatially distributed, independently controlled agents. The array topology is inherently uncertain due to environmental disturbances and mobility-induced deviations. These perturbations degrade coherent signal superposition and introduce random fluctuations into the radiation pattern. Consequently, understanding the impact of such positional uncertainties is essential to ensure the performance of swarm-based antenna systems.

Moreover, in many scenarios, agents may be randomly or irregularly positioned, resulting in a disordered configuration. Notably, when these positions are distributed randomly within a known region, certain aggregate measurements exhibit statistical regularities that asymptotically converge to deterministic patterns. This phenomenon reflects a fundamental principle in large-scale systems, where microscopic randomness leads to emergent macroscopic order.

\subsection{Contribution}
The primary contributions of this work are summarized as follows:

\begin{itemize}
\item \textbf{Necessary and sufficient conditions for the existence of grating lobes.} We establish a rigorous theoretical framework that characterizes the necessary and sufficient conditions for the emergence of grating lobes in multiple linear antenna arrays. For this class of topologies, we derive explicit constraints on both the inter-element spacing within each linear sub-array and the relative positioning of their leading elements. A significant and surprising finding is that, for dual linear arrays, the classical half-wavelength spacing criterion can be relaxed without introducing spatial aliasing. This result fundamentally challenges traditional array design principles and paves the way for more flexible and collision-aware swarm antenna architectures.

\item \textbf{Statistical analysis of fluctuations induced by spatial perturbations.} We present a theoretical analysis, supported by empirical validation, to demonstrate that coherent gain can be approximately maintained under realistic models of positional perturbations. The analysis provides an explicit characterization of how the induced fluctuations depend on key system parameters. It is shown that spatial perturbations result in a measurable degradation of the main lobe, an effect that cannot be mitigated simply by increasing the number of antenna elements. The primary benefit of scaling up the array lies in suppressing the variance of the perturbation-induced fluctuations.

\item \textbf{Emergent deterministic behavior in large disordered antenna arrays.} 

We investigate the statistical regularities that asymptotically converge to deterministic patterns in large-scale disordered swarm antenna arrays. In particular, we analyze the Euclidean random matrix associated with such arrays and show that, when antenna positions are randomly distributed within a known region, the spectral properties exhibit deterministic behavior in the large-system limit. This emergent order, arising from spatial randomness, reveals fundamental structural insights that can be leveraged for future applications.

\end{itemize}

\subsection{Outline}

The remainder of this paper is organized as follows. Section~\ref{S:II} introduces the wave equation and Green's function, which provide the foundation for the modeling approach. Section~\ref{S:III} presents a deterministic framework for coherent beamforming and characterizes the conditions for grating lobes in multiple linear topologies. Section~\ref{S:IV} analyzes the impact of spatial perturbations on array performance. Section~\ref{S:V} investigates the emergent deterministic behavior in disordered arrays through spectral analysis of Euclidean random matrices. Finally, Section~\ref{S:VI} concludes the paper.

\section{Wave equation and Green's function}\label{S:II}

We initiate our exploration by considering an idealized point source emitting radiation radially, uniformly in all directions.  In this scenario, the three-dimensional differential wave equation governing the behavior of the wave function $\psi (x, y, z; t)$ is simplified to a one-dimensional form \cite{hecht2012optics}
\begin{equation}\label{E:OneDimensionalWaveEquation}
\frac{ \partial ^2 } { \partial r^2} ( r \psi ) = \frac{1}{c^2} \frac{ \partial ^2 } { \partial t^2} ( r \psi ) . 
\end{equation}
Here, with a slight abuse of notations, we use $\psi (r; t)$ to denote a wave function of both the spatial variable $r$ (distance to the source) and the time variable $t$. The solution to \eqref{E:OneDimensionalWaveEquation} is given by
\[
  \psi (r; t) = \frac{f ( c t - r )}{r} .
\]
This solution represents a spherical wave progressing radially outward from the source at a constant speed $c$, and exhibiting an arbitrary functional form $f$. For harmonic spherical waves, the solution takes the form
\begin{equation}\label{E:HarmonicSphericalWaves}
  \psi (r; t) = \frac{E_0}{r} e^{j ( \Omega t - 2 \pi r / \lambda )} = \frac{ e^{ - j 2 \pi r / \lambda } } { r } E_0 e^{ j \Omega t }.
\end{equation}
Here, $E_0$ denotes the amplitude of the source.

From \eqref{E:HarmonicSphericalWaves}, we observe that a harmonic spherical wave consists of two components: a temporal factor $e^{ j \Omega t }$, and a spatial attenuation term $\frac{ e^{ - j 2 \pi r / \lambda } } { r }$, which closely resembles the free-space Green's function of the Helmholtz equation (up to a constant factor). It is well known that the fundamental solution, or Green's function, to the inhomogeneous Helmholtz equation in free space is given by
\begin{equation}\label{E:GreenFunction}
G(\mathbf{r}, \mathbf{r}') = \frac{ e^{ - j 2 \pi \| \mathbf{r} - \mathbf{r}' \| / \lambda } }{ 4 \pi \| \mathbf{r} - \mathbf{r}' \| },
\end{equation}
where $\| \mathbf{r} - \mathbf{r}' \|$ denotes the Euclidean distance between the observation point $\mathbf{r}$ and the source location $\mathbf{r}'$. This Green's function characterizes the field response at an arbitrary point in space due to a unit-amplitude point source.

\section{Deterministic Modeling for Coherent Beamforming of Swarm Antenna Arrays}\label{S:III}

We investigate the coherent beamforming capabilities of swarm antenna arrays, where each antenna element coordinates its transmission or reception to achieve constructive interference in a desired direction. The theoretical foundations underlying this behavior remain largely unexplored. In this paper, we focus on a simplified scenario in which both the antenna elements and the target are confined to a common plane, as illustrated in Fig.~\ref{F:Plane_MultipleLinear}. This simplification preserves the essential principles---much like a uniform linear array can be regarded as the projection of a uniform planar array onto a single dimension.

\begin{figure}[htbp]
  \centering
  \includegraphics[width=\columnwidth]{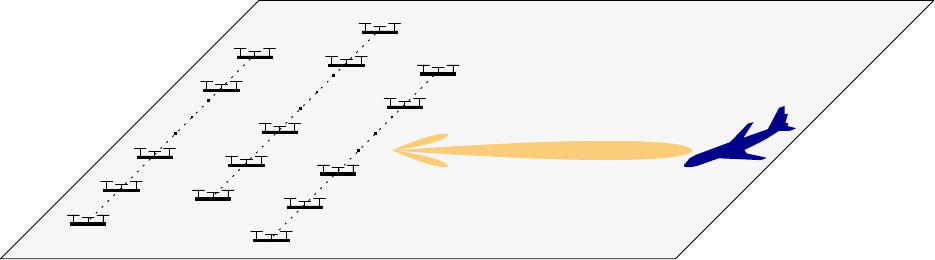}
  \caption{Coherent beamforming via swarm antenna arrays.}
  \label{F:Plane_MultipleLinear}  
\end{figure}

\subsection{Coordinated Beamforming in the Far-Field Regime}

In the far-filed regime, the distances between the antenna elements and the target exhibit a near-linear relationship with the reference distance. According to the geometry illustrated in Fig.~\ref{F:DistanceApproximation}, the distance $r_n$ is expressed as
\begin{equation}\label{E:r_n}
  \begin{aligned}
  r_n = & \sqrt { \left( r \sin \theta - x_n \right)^2 + \left( r \cos \theta - y_n \right)^2 } \\
  = & \sqrt { r^2 - 2 r x_n \sin \theta - 2 r y_n \cos \theta + x_n^2 + y_n^2  } \\
  = & r \sqrt { 1 - 2 \frac{ x_n \sin \theta + y_n \cos \theta }{ r } + \frac{ x_n^2 + y_n^2 }{ r^2 } } .
  \end{aligned}
\end{equation}
Under the far-field approximation, where $ x_n \sin \theta + y_n \cos \theta \ll r $, or equivalently $ (x_n \sin \theta + y_n \cos \theta) / r \ll 1$. Applying a first-order Taylor expansion $\sqrt{1+x} \approx 1+x/2$ for small $x$, and neglecting higher-order terms, the distance $r_n$ can be approximated as
\begin{equation}\label{E:TaylorExpansion}
  \begin{aligned}
  & \sqrt { 1 - 2 \frac{ x_n \sin \theta + y_n \cos \theta }{ r } + \frac{ x_n^2 + y_n^2 }{ r^2 } } \\
\approx & 1 - \frac{ x_n \sin \theta + y_n \cos \theta }{ r } + \frac{ x_n^2 + y_n^2 }{ 2 r^2 } 
\approx 1 - \frac{ x_n \sin \theta + y_n \cos \theta }{ r }.
  \end{aligned}
\end{equation}
Substituting \eqref{E:TaylorExpansion} into \eqref{E:r_n}, we obtain the linearized expression
\[
r_n \approx r - ( x_n \sin \theta + y_n \cos \theta ) .
\]

\begin{figure}[htbp]
  \centering
  \includegraphics[width=0.5\columnwidth]{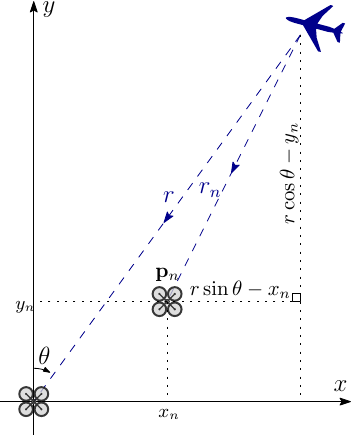}
  \caption{Near-linear distance approximation in the far-field regime.}
  \label{F:DistanceApproximation}  
\end{figure}

For coherent beamforming in the far-field regime, the attenuation represented by the denominators in \eqref{E:HarmonicSphericalWaves} and \eqref{E:GreenFunction} across different elements can be considered equivalent. The primary contribution to the array response arises from the exponential term, which captures the phase differences among the antenna elements. Let $w_n$ denote the weight associated with the $n$-th antenna element. The response in the direction $\theta$ is then given by
\begin{equation}\label{E:BeamPatternGeneral}
  f_{\mathbf{w}}(\theta) = \frac{ 1 }{ \sum_{n=1}^{N} | w_{n} | } \sum_{n=1}^{N} w_{n} e^{j 2 \pi [ x_{n} \sin \theta + y_{n} \cos \theta ] / \lambda} .
\end{equation}
Here, the subscript indicates the beam pattern function associated with the weight vector $\mathbf{w}$. Notably, the normalization factor $1/ \sum_{n=1}^{N} |w_n|$ ensures that the overall array response maintains a consistent amplitude, facilitating comparisons across various scenarios. A direct consequence of this normalization is that $|f_{\mathbf{w}}(\theta)| \leq 1$ for all $\theta$.

If the weights are chosen as
\begin{equation}\label{E:WeightSelection0}
w_n = |w_n| e^{-j 2 \pi [ x_n \sin \theta_s + y_n \cos \theta_s ] / \lambda} ,
\end{equation}
then the array response satisfies $f_{\mathbf{w}}(\theta_s) = 1$, indicating that the beam achieves its maximum value at the desired steering direction $\theta_s$. The underlying principle is that this weight selection effectively compensates for the phase differences induced by the spatial distribution of the antenna elements relative to $\theta_s$, resulting in coherent signal addition along the desired direction. A widely adopted choice is to set $| w_n | =1$ and 
\begin{equation}\label{E:WeightSelection1}
w_n = e^{-j 2 \pi [ x_n \sin \theta_s + y_n \cos \theta_s ] / \lambda} ,
\end{equation}
corresponding to uniform excitation across all antenna elements. Although alternative weight selection strategies exist, such as optimization-based methods and adaptive algorithms, we adopt phase-difference compensation in the following analysis due to its simplicity and analytical tractability.

\subsection{Multiple Linear Topologies and Grating Lobe Conditions}

As a generalization of traditional linear arrays, the multiple-linear topology represents the simplest and most analytically tractable configuration within swarm antenna arrays. In this topology, antenna elements are arranged along several distinct linear formations. Despite its relative simplicity, the multiple linear arrangement captures essential characteristics of beam steering capability, making it a valuable foundation for exploring more complex swarm configurations.

\begin{figure}[htbp]
  \centering
  \includegraphics[width=0.75\columnwidth]{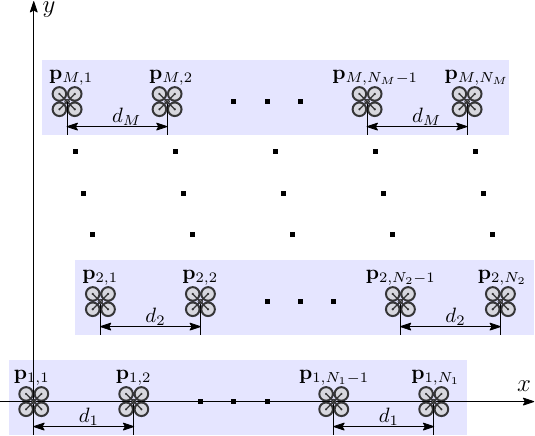}
  \caption{Multiple linear topology. Antenna elements are arranged along several distinct linear arrays.}
  \label{F:MultipleLinearTopology}  
\end{figure}

In uniform linear arrays, a fundamental design constraint is the suppression of grating lobes---undesired secondary maxima in the beam pattern that result from the periodic spatial sampling of the wavefront. These lobes can significantly degrade directional performance by introducing ambiguities and reducing energy concentration in the main beam. To avoid grating lobes, a classical design criterion requires that the inter-element spacing be less than or equal to half the wavelength ($d \leq \lambda/2$). We examine the conditions governing the existence of grating lobes in multiple-linear array configurations.

Mathematically, the principle of grating lobes is defined based on the periodicity of the array response, as formalized below.

\begin{defn}[Principle of Grating Lobes]
Let $f_{\mathbf{w}} (\cdot)$ denote the beam pattern function, and let $[\theta_1, \theta_2]$ be a non-empty angular interval. If there exists a mapping $\mathfrak{M} (\cdot)$ such that for every $\mathbf{w}$ and all $\theta \in (\theta_1, \theta_2)$, $\mathfrak{M} (\theta) \neq \theta $ and 
\[
f_{\mathbf{w}} (\theta) = f_{\mathbf{w}}( \mathfrak{M} (\theta) ) ,
\]
then $f_{\mathbf{w}}(\cdot)$ is referred to as a periodic beam pattern and $\mathfrak{M} (\cdot)$ is called the period mapping. If the absolute radiation maxima occur within $[\theta_1, \theta_2]$, then the pattern exhibits grating lobes. 
\end{defn}

Consider the topology illustrated in Fig.~\ref{F:MultipleLinearTopology}. Under this configuration, the general array response in \eqref{E:BeamPatternGeneral} simplifies to
\begin{equation}\label{E:BeamPattern}
\begin{aligned}
& f_{\mathbf{w}}(\theta) \\
= & \frac{ 1 }{ \sum_{n=1}^{N_1} | w_{1,n} | + \cdots + \sum_{n=1}^{N_M} | w_{M,n} | } \Bigl( \sum_{n=1}^{N_1} w_{1,n} e^{j 2 \pi (n-1) d_1 \sin \theta / \lambda} \\
& + e^{j 2 \pi [ x_{2,1} \sin \theta + y_{2,1} \cos \theta ] / \lambda} \sum_{n=1}^{N_2} w_{2,n} e^{j 2 \pi (n-1) d_2 \sin \theta / \lambda}  \\
& + \cdots \\
& + e^{j 2 \pi [ x_{M,1} \sin \theta + y_{M,1} \cos \theta ] / \lambda} \sum_{n=1}^{N_M} w_{M,n} e^{j 2 \pi (n-1) d_M \sin \theta / \lambda} \Bigr) .
\end{aligned}
\end{equation}
A distinctive feature of this function is that it explicitly depends on the positioning of the first element of each sub-array relative to a global coordinate system. In particular, the terms associated with each individual linear array correspond to harmonic (complex exponential) functions of $d_m \sin \theta / \lambda$, reflecting the regular spacing of elements along each line.

\begin{thm}\label{T:theorem1}
A necessary and sufficient condition for the periodicity at $\theta$ of the beam pattern $f_{\mathbf{w}} ( \cdot )$ associated with multiple linear  topology is that for $i=1, \ldots, M$,
\begin{equation*}\label{E:Condition_1}
(C1) \qquad \frac{d_i \left( \sin \theta - \sin (\mathfrak{M} (\theta)) \right) }{\lambda} \in \mathbb{Z} \setminus \{0\},
\end{equation*}
and for $l = 2, \ldots, M$, 
\begin{equation*}\label{E:Condition_2}
(C2) \quad\frac{ x_{l,1} \left( \sin \theta - \sin ( \mathfrak{M} (\theta) ) \right) + y_{l,1} \left( \cos \theta - \cos ( \mathfrak{M} (\theta) ) \right) }{ \lambda } \in \mathbb{Z} .
\end{equation*}
\end{thm}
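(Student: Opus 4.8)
The plan is to reduce the functional identity $f_{\mathbf{w}}(\theta) = f_{\mathbf{w}}(\mathfrak{M}(\theta))$, which by the definition must hold for \emph{every} weight vector $\mathbf{w}$, to a finite collection of pointwise conditions on the complex exponential factor attached to each element. Writing $\theta' := \mathfrak{M}(\theta)$, I would first note that the normalization $1/\sum_{l,n}|w_{l,n}|$ in \eqref{E:BeamPattern} is independent of the angle, so it cancels on both sides and periodicity is equivalent to equality of the two unnormalized sums. Collecting the phase factor multiplying each weight, I would record $c_{1,n}(\theta) = e^{j2\pi(n-1)d_1\sin\theta/\lambda}$ for the first sub-array (whose leading element sits at the origin, so no leading-phase term appears) and $c_{l,n}(\theta) = e^{j2\pi[(x_{l,1}+(n-1)d_l)\sin\theta + y_{l,1}\cos\theta]/\lambda}$ for $l \geq 2$. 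The identity then reads $\sum_{l,n} w_{l,n}\,[c_{l,n}(\theta) - c_{l,n}(\theta')] = 0$.

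Because the weights $w_{l,n}$ are free complex parameters, this linear form vanishes for all $\mathbf{w}$ if and only if every coefficient vanishes: testing the single-support vectors (one $w_{l,n}=1$, the rest zero) isolates $c_{l,n}(\theta) = c_{l,n}(\theta')$ for each pair $(l,n)$, and conversely these pointwise equalities trivially make the full sum agree. This equivalence is the crux of the argument and the step I would treat most carefully, since it is exactly what converts the ``for every $\mathbf{w}$'' quantifier into element-wise phase matching.

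It then remains to translate each equality $c_{l,n}(\theta) = c_{l,n}(\theta')$ into the stated constraints, using that two complex exponentials coincide precisely when their phases differ by an integer multiple of $2\pi$, i.e.\ when the corresponding quantity divided by $\lambda$ lies in $\mathbb{Z}$. For the first sub-array, the case $n=1$ is trivial while $n=2$ yields $d_1(\sin\theta-\sin\theta')/\lambda \in \mathbb{Z}$, and larger $n$ add nothing once this holds; the analogous $n=1$ versus $n=2$ comparison inside the $l$-th sub-array gives $d_l(\sin\theta-\sin\theta')/\lambda \in \mathbb{Z}$, which is (C1). Matching the leading element ($n=1$) of sub-array $l \geq 2$ produces exactly (C2), and one checks that (C1) together with (C2) already force agreement for all remaining $n$, so no further conditions arise.

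Finally, I would account for the exclusion of $0$ in (C1). If the integer vanished for some $i$, then $d_i \neq 0$ forces $\sin\theta = \sin\theta'$, so the first-array phases match trivially and the mapping degenerates to the front--back reflection $\theta' = \pi-\theta$; this case either contradicts the requirement $\mathfrak{M}(\theta) \neq \theta$ over the admissible forward angular sector on which $\sin$ is one-to-one, or is ruled out as a pure reflection that does not constitute spatial aliasing. Hence the integer must lie in $\mathbb{Z}\setminus\{0\}$. The handling of this nonzero constraint, together with the reduction in the second paragraph, is where I expect the only genuine subtlety to lie, the intervening manipulations being routine.
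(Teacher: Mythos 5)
Your proof is correct and follows essentially the same route as the paper's: it reduces the for-every-$\mathbf{w}$ identity to term-by-term equality of the complex exponentials (the paper asserts this step outright; your single-support test vectors are precisely its justification), then reads off (C1) and (C2) by comparing the $n=1$ and $n\geq 2$ phase factors within each sub-array. Your closing treatment of the excluded zero integer is actually more explicit than the paper's, which places the quantity in $\mathbb{Z}\setminus\{0\}$ without comment on why the degenerate reflection $\mathfrak{M}(\theta)=\pi-\theta$ is ruled out.
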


\begin{proof}
We begin by proving the necessary condition. The response function $f_{\mathbf{w}} (\theta)$ is expressed as a sum of complex exponentials with coefficients determined by $\mathbf{w}$ (see \eqref{E:BeamPattern}). Suppose there exists $\mathfrak{M} (\theta) \neq \theta$ such that 
$f_{\mathbf{w}}(\theta) = f_{\mathbf{w}}( \mathfrak{M} (\theta) )$ for any choice of $\mathbf{w}$. Then each corresponding exponential term in the sum must be equal.

That is, for $n_1 = 1, \ldots, N_1$,
\begin{equation}\label{E:exp_n1}
  e^{ j 2 \pi (n_1 - 1) d_1 \sin \theta / \lambda } = e^{ j 2 \pi (n_1 -1) d_1 \sin ( \mathfrak{M} (\theta) ) / \lambda } \\
\end{equation}
and for $l = 2, \ldots, M$, and $n_l = 1, \ldots, N_l$,

\begin{equation}\label{E:exp_exp_nl}
\begin{aligned}
& e^{ j 2 \pi ( x_{l,1} \sin \theta + y_{l,1} \cos \theta ) / \lambda} e^{ j 2 \pi (n_l-1) d_l \sin \theta / \lambda } \\ 
= & e^{ j 2 \pi ( x_{l,1} \sin ( \mathfrak{M} (\theta) ) + y_{l,1} \cos ( \mathfrak{M} (\theta) ) / \lambda} e^{ j 2 \pi (n_l-1) d_l \sin ( \mathfrak{M} (\theta) ) / \lambda } .
\end{aligned}
\end{equation}
When $n_l = 1$, the exponential factors associated with $n_l-1$ in \eqref{E:exp_exp_nl} are all reduced to 1, so that for $l = 2, \ldots, M$, we obtain
\begin{equation}\label{E:exp_x_y}
e^{j 2 \pi ( x_{l,1} \sin \theta + y_{l,1} \cos \theta ) / \lambda} = e^{j 2 \pi ( x_{l,1} \sin ( \mathfrak{M} (\theta) ) + y_{l,1} \cos ( \mathfrak{M} (\theta) ) / \lambda} ,
\end{equation}
and, consequently, for $n_l = 2, \ldots, N_l$,
\begin{equation}\label{E:exp_nl}
e^{j 2 \pi (n_l-1) d_l \sin \theta / \lambda} = e^{j 2 \pi (n_l-1) d_l \sin ( \mathfrak{M} (\theta) ) / \lambda} .
\end{equation}

From \eqref{E:exp_n1} and \eqref{E:exp_nl}, we deduce that for each $i=1, \ldots, M$,
\[
 \frac{d_i \left( \sin \theta - \sin ( \mathfrak{M} (\theta) ) \right) }{\lambda} \in \mathbb{Z} \setminus \{0\}.
\]
Similarly, from \eqref{E:exp_x_y}, we have for $l = 2, \ldots, M$, 
\[
\frac{ x_{l,1} \left( \sin \theta - \sin ( \mathfrak{M} (\theta) ) \right) + y_{l,1} \left( \cos \theta - \cos ( \mathfrak{M} (\theta) ) \right) }{ \lambda } \in \mathbb{Z} .
\]

The sufficient condition is straightforward. If conditions (C1) and (C2) hold, the complex exponential terms in \eqref{E:BeamPattern} for $\theta$ and $\mathfrak{M} (\theta)$ are equal due to their inherent periodicity. Consequently, 
\[
f_{\mathbf{w}}(\theta) = f_{\mathbf{w}}( \mathfrak{M} (\theta) )
\]
for every weight vector $\mathbf{w}$.
\end{proof}

\begin{rem}
  Condition (C1) specifies the requirements on the inter-element distances within each linear array, while (C2) governs the positioning of the first element of each sub-array. For grating lobes to exist, both (C1) and (C2) must be satisfied simultaneously. This simultaneous satisfaction imposes strict structural constraints on both the spacings and the placements of the sub-arrays, emphasizing the inherently coupled nature of array geometry.
\end{rem}

\begin{rem}
  Condition (C1) further implies that a necessary condition for the existence of grating lobes is that the ratio between any two inter-element distances, $d_i/d_j$, must be a rational number.  This observation indicates that irrational ratios between inter-element spacings inherently prevent the alignment of phase fronts necessary for coherent secondary maxima, thus naturally suppressing grating lobe formation.
\end{rem}

\subsection{Dual Linear Topologies}

To gain a more informative understanding of Theorem~\ref{T:theorem1}, we focus on the dual linear topology where the inter-element spacings are identical, i.e., $d_1 = d_2 = d$, as illustrated in Fig.~\ref{F:DualLinear}.  In this scenario, Conditions (C1) and (C2) simplify to 
\begin{equation}\label{E:Condition_11}
d/\lambda \left( \sin \theta - \sin (\mathfrak{M} (\theta)) \right) = p \in \mathbb{Z} \setminus \{ 0 \} ,
\end{equation}
and
\begin{equation}\label{E:Condition_21}
x_{2,1}/\lambda \left( \sin \theta - \sin ( \mathfrak{M} (\theta) ) \right) + y_{2,1} /\lambda \left( \cos \theta - \cos ( \mathfrak{M} (\theta) ) \right) = q \in \mathbb{Z} .
\end{equation}

\begin{figure}[htbp]
  \centering
  \includegraphics[width=0.75\columnwidth]{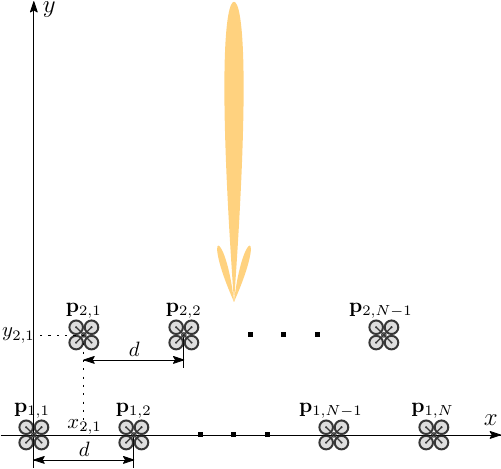}
  \caption{Dual linear topology. The inter-element spacings within both sub-arrays are identical, i.e., $d_1 = d_2 = d$.}
  \label{F:DualLinear}
\end{figure}

From \eqref{E:Condition_11}, we obtain
\[
  \sin \theta - \sin ( \mathfrak{M} (\theta) ) = \frac{p}{d / \lambda} .
\]
Similarly, from \eqref{E:Condition_21}, we have
\[
  \cos \theta - \cos ( \mathfrak{M} (\theta) ) = \frac{q d / \lambda - p x_{2,1} / \lambda}{ (d / \lambda) (y_{2,1} / \lambda) } .
\]
Combining these two expressions, it follows that
\begin{equation}\label{E:SufficientCondition_1}
 \left( \frac{p}{d / \lambda} \right)^2 + \left( \frac{q d / \lambda - p x_{2,1} / \lambda}{ (d / \lambda) (y_{2,1} / \lambda)} \right)^2 = 2 - 2 \cos( \theta - \mathfrak{M} (\theta) ) .
\end{equation}
Thus, \eqref{E:SufficientCondition_1} establishes a sufficient condition under which the beam pattern loses its periodicity.

\begin{thm}\label{T:SufficientGratingLobes}
For a dual linear swarm antenna array, if for all $p \in \mathbb{Z} \setminus \{0\}$ and $q \in \mathbb{Z}$ the inequality
\begin{equation*}
(C3) \qquad  \left( \frac{p}{d / \lambda} \right)^2 + \left( \frac{q d / \lambda - p x_{2,1} / \lambda}{ (d / \lambda) (y_{2,1} / \lambda)} \right)^2 > 4 
\end{equation*}
holds, then the beam pattern is not periodic and no grating lobes exist.
\end{thm}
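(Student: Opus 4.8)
The plan is to argue by contradiction, using the necessary direction of Theorem~\ref{T:theorem1} as the engine. Suppose, contrary to the claim, that the beam pattern were periodic; by definition this requires a period mapping $\mathfrak{M}(\cdot)$ with $\mathfrak{M}(\theta) \neq \theta$ and $f_{\mathbf{w}}(\theta) = f_{\mathbf{w}}(\mathfrak{M}(\theta))$ for every $\mathbf{w}$ and every $\theta$ in some non-empty interval. Fix any such $\theta$. Specializing Theorem~\ref{T:theorem1} to the dual linear case $d_1 = d_2 = d$, periodicity at $\theta$ forces the existence of integers $p \in \mathbb{Z}\setminus\{0\}$ and $q \in \mathbb{Z}$ for which the simplified conditions \eqref{E:Condition_11} and \eqref{E:Condition_21} hold simultaneously. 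The goal is then to show that (C3) is incompatible with the existence of any such pair $(p,q)$.

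First I would eliminate the unknown $\mathfrak{M}(\theta)$ from the two conditions. From \eqref{E:Condition_11} I solve for $\sin\theta - \sin(\mathfrak{M}(\theta))$, and substituting into \eqref{E:Condition_21} I solve for $\cos\theta - \cos(\mathfrak{M}(\theta))$, exactly as in the derivation preceding the statement. Squaring and adding, and invoking the elementary identity
\[
\left(\sin\theta - \sin\phi\right)^2 + \left(\cos\theta - \cos\phi\right)^2 = 2 - 2\cos(\theta - \phi),
\]
yields the combined relation \eqref{E:SufficientCondition_1}. The crucial observation is that its right-hand side is a bounded quantity: since $\cos(\theta - \mathfrak{M}(\theta)) \in [-1,1]$, we have $2 - 2\cos(\theta - \mathfrak{M}(\theta)) \in [0,4]$, so the left-hand side evaluated at the admissible pair $(p,q)$ can never exceed $4$.

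This bound collides directly with hypothesis (C3). Since (C3) asserts that the left-hand side of \eqref{E:SufficientCondition_1} strictly exceeds $4$ for every $p \in \mathbb{Z}\setminus\{0\}$ and $q \in \mathbb{Z}$, it applies in particular to the specific pair forced by periodicity, giving a value $> 4$ that simultaneously must be $\leq 4$---a contradiction. Hence no admissible $\mathfrak{M}(\theta)$ exists at any $\theta$, no period mapping can be defined on any interval, and by the definition of grating lobes (which presupposes periodicity) none can occur. I expect the only delicate point to be the logical quantifier bookkeeping rather than any computation: Theorem~\ref{T:theorem1} only guarantees that \emph{some} integer pair arises from periodicity, so the sufficiency of (C3) hinges on its being a \emph{universal} statement over all $(p,q)$, which is precisely what rules out every candidate at once and makes the $\theta$-independent condition (C3) enough to preclude periodicity globally.
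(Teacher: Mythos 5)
Your proof is correct and follows essentially the same route as the paper: the paper likewise specializes conditions (C1)--(C2) to \eqref{E:Condition_11}--\eqref{E:Condition_21}, eliminates $\mathfrak{M}(\theta)$ to obtain \eqref{E:SufficientCondition_1}, and concludes from the bound $2 - 2\cos(\theta - \mathfrak{M}(\theta)) \le 4$ that no period mapping can exist under (C3); your contradiction framing is just the contrapositive of the paper's direct statement. If anything, your write-up is more explicit about the quantifier structure (the universal hypothesis over $(p,q)$ versus the existential pair forced by periodicity), which the paper leaves implicit.
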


\begin{proof}
Assume Condition (C3) holds. Then, for every $\theta \in [0, 2 \pi)$, no mapping $\mathfrak{M} (\theta)$ can satisfy the identity given by \eqref{E:SufficientCondition_1}. In other words, there exists no mapping $\mathfrak{M} (\theta)$ that simultaneously satisfies \eqref{E:Condition_11} and \eqref{E:Condition_21}. Therefore, the beam pattern cannot be periodic, and grating lobes do not exist.
\end{proof}

For conventional uniform linear arrays, a fundamental and well-established design rule states that grating lobes will occur if the element spacing exceeds half a wavelength. This design principle is embedded within Condition (C3). Specifically, from the first square term in (C3), it follows that if $d < \lambda/2$, then (C3) is automatically satisfied, irrespective of the coordinates $(x_{2,1}, y_{2,1})$.

A particularly important and surprising finding from Theorem~\ref{T:SufficientGratingLobes} is that, for dual linear arrays, the classical half-wavelength limit is no longer the strict threshold governing grating lobe formation. Instead, Condition (C3) reveals a more relaxed spacing requirement for the suppression of grating lobes compared to the conventional half-wavelength constraint. This result significantly broadens the feasible design space for swarm antenna arrays, providing greater flexibility in array topology planning and deployment.

\begin{figure}[!htbp]
  \centering
  \subfigure[]{
  \label{Fig:GreaterHalfWavelength:0_8}
  \includegraphics[height=0.25\columnwidth]{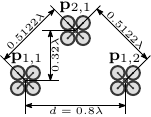}}
  \hspace{6mm}
  \subfigure[]{
  \label{Fig:GreaterHalfWavelength:0_5774}
  \includegraphics[height=0.3\columnwidth]{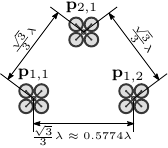}}
  \caption{Relaxation of the classical half-wavelength constraint.}
  \label{Fig:GreaterHalfWavelength}
\end{figure}

To demonstrate the breaking of the half-wavelength limit, we consider the case where $d = 0.8\lambda$ and $x_{2,1} = 0.4\lambda$. In this setting, if
\[
y_{2,1} < \frac{1}{\sqrt{9.75}} \approx 0.3203,
\]
then (C3) holds for all $p \in \mathbb{Z} \setminus \{0\}$ and $q \in \mathbb{Z}$. This verification is straightforward. For simplicity, we set $y_{2,1} = 0.32 \lambda$. Notably, the distance between $\mathbf{P}_{2,1}$ and $\mathbf{P}_{1,1}$ is approximately $0.5122 \lambda$, which exceeds half a wavelength. The topology of the leading elements of each line is illustrated in Fig.~\ref{Fig:GreaterHalfWavelength:0_8}.

To validate the absence of grating lobes, we conduct a numerical experiment in which the first line consists of 50 antenna elements and the second line consists of 49 antenna elements. The weights are selected according to \eqref{E:WeightSelection1}, and the steering angle $\theta_s$ is swept from $-\pi/2$ to $\pi/2$. The corresponding steered beam patterns are shown in Fig.~\ref{Fig:Pattern0}. It can be observed that no grating lobes appear across the entire steering range. Although higher side lobes are present for certain steering angles, these side lobes do not qualify as grating lobes.

\begin{figure}[!htbp]
  \centering
  \subfigure[]{
  \label{Fig:Pattern0:Mesh}
  \includegraphics[width=0.48\columnwidth]{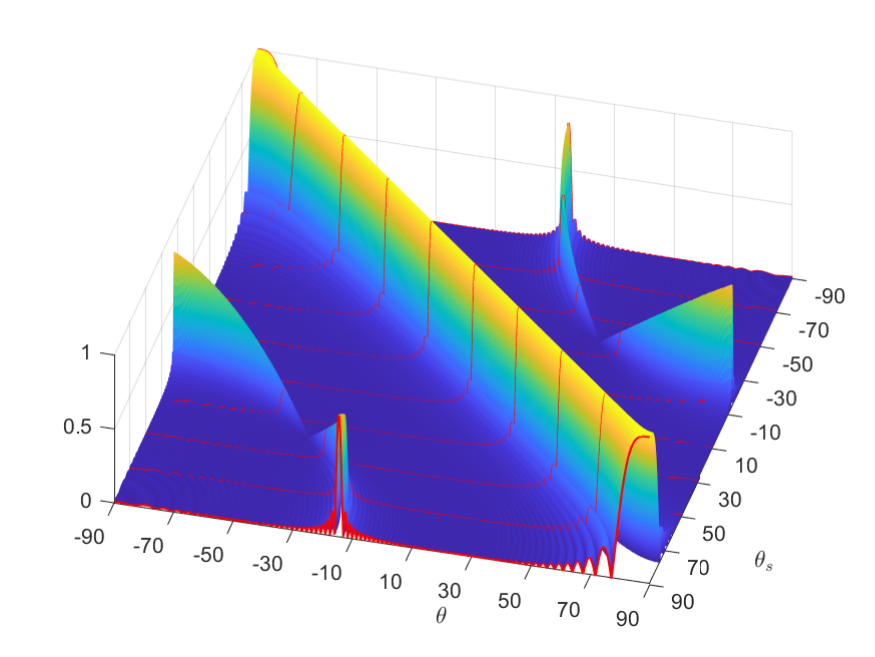}}
  \subfigure[]{
  \label{Fig:Pattern0:Plot}
  \includegraphics[width=0.48\columnwidth]{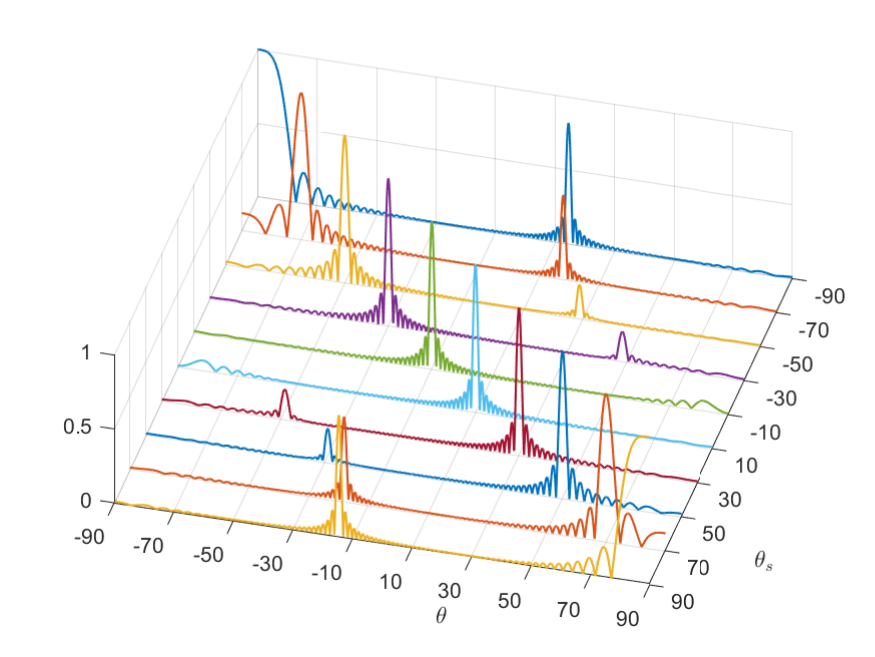}}
  \caption{Beam patterns associated with the dual linear topology where $d=0.8\lambda$, $x_{2,1}=0.4 \lambda$ and $y_{2,1} = 0.32\lambda$. The steering angle $\theta_s$ is swept from $-\pi/2$ to $\pi/2$. No grating lobes are observed across the entire range.}
  \label{Fig:Pattern0}
\end{figure}

We now restrict the topology of the leading elements to a regular triangular configuration. Consider a scenario where $d = \sqrt{3} \lambda / {3} \approx 0.5774 \lambda$, $x_{2,1} = \sqrt{3} \lambda / {6} \approx 0.2887 \lambda$ and $y_{2,1} = 0.5 \lambda$, as illustrated in Fig.~\ref{Fig:GreaterHalfWavelength:0_5774}. In this configuration, for all $p \in \mathbb{Z} \setminus \{0\}$ and $q \in \mathbb{Z}$, the following identity holds
\begin{equation*}
  \left( \frac{p}{d / \lambda} \right)^2 + \left( \frac{q d / \lambda - p x_{2,1} / \lambda}{ (d / \lambda) (y_{2,1} / \lambda)} \right)^2 = 3 p^2 + (2q - p)^2 \ge 4 .
\end{equation*}
Equality is achieved only when $p=1$, $q=0$ or $p=1$, $q=1$. The corresponding swept beam patterns are illustrated in Fig.~\ref{Fig:PatternA}, where it can be observed that no grating lobes appear across the entire steering range. This example further confirms that the classical half-wavelength limit is no longer a constraint in the context of dual linear array topologies.

\begin{figure}[!htbp]
  \centering
  \subfigure[]{
  \label{Fig:PatternA:Mesh}
  \includegraphics[width=0.48\columnwidth]{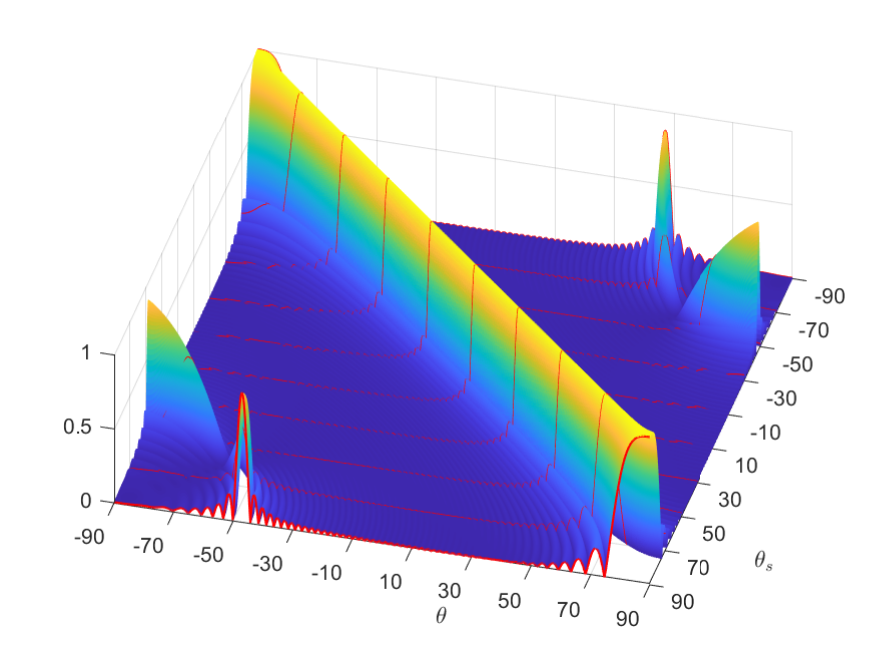}}
  \subfigure[]{
  \label{Fig:PatternA:Plot}
  \includegraphics[width=0.48\columnwidth]{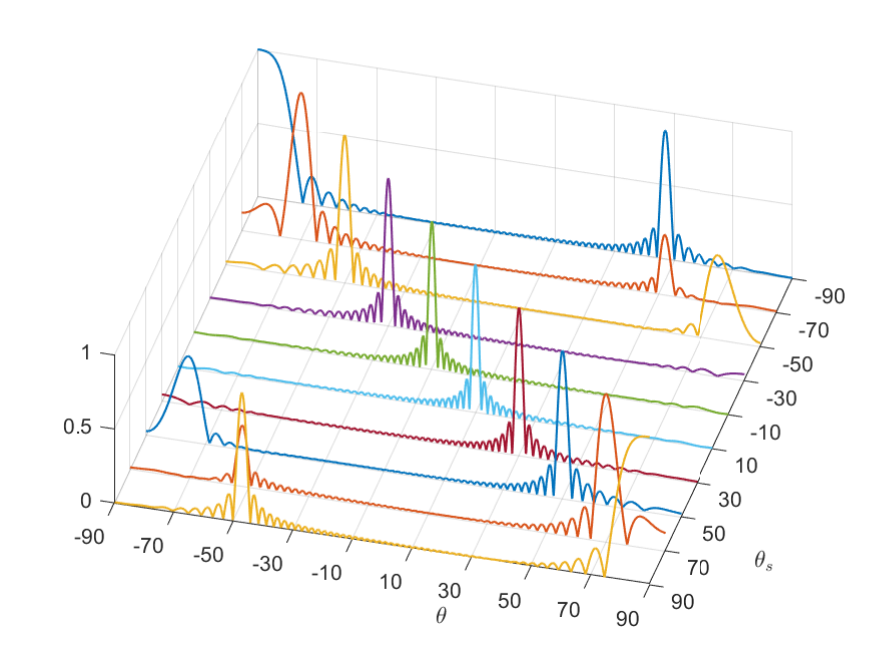}}
  \caption{Beam patterns for a dual linear topology with leading elements arranged in a regular triangular configuration, where $d = \sqrt{3} \lambda / {3}$, $x_{2,1} = \sqrt{3} \lambda / {6}$ and $y_{2,1} = 0.5 \lambda$. No grating lobes are observed across the entire range.}
  \label{Fig:PatternA}
\end{figure}

Finally, we point out that Theorems~\ref{T:theorem1} and \ref{T:SufficientGratingLobes} do not impose any constraint on the range of $\mathfrak{M} (\cdot)$. In practice, however, the angular field of view is often limited to the interval $[-\pi/2, \pi/2]$. This implies that, for certain topologies, even if (C3) is not satisfied, grating lobes may still not be observed within the region of interest $[-\pi/2, \pi/2]$. A detailed investigation of this phenomenon is beyond the scope of this paper; instead, we illustrate it through an example.

If the regular triangular topology is preserved but the side length is increased to $d = 0.6 \lambda > \sqrt{3} \lambda / {3}$ (so that both $x_{2,1}$ and $y_{2,1}$ scale proportionally), then (C3) is no longer satisfied for $p=1$, $q=0$ and $p=1$, $q=1$. However, numerical results in Fig.~\ref{Fig:PatternB} indicate that no grating lobes appear within the region of interest $[-\pi/2, \pi/2]$.

\begin{figure}[!htbp]
  \centering
  \subfigure[]{
  \label{Fig:PatternB:Mesh}
  \includegraphics[width=0.48\columnwidth]{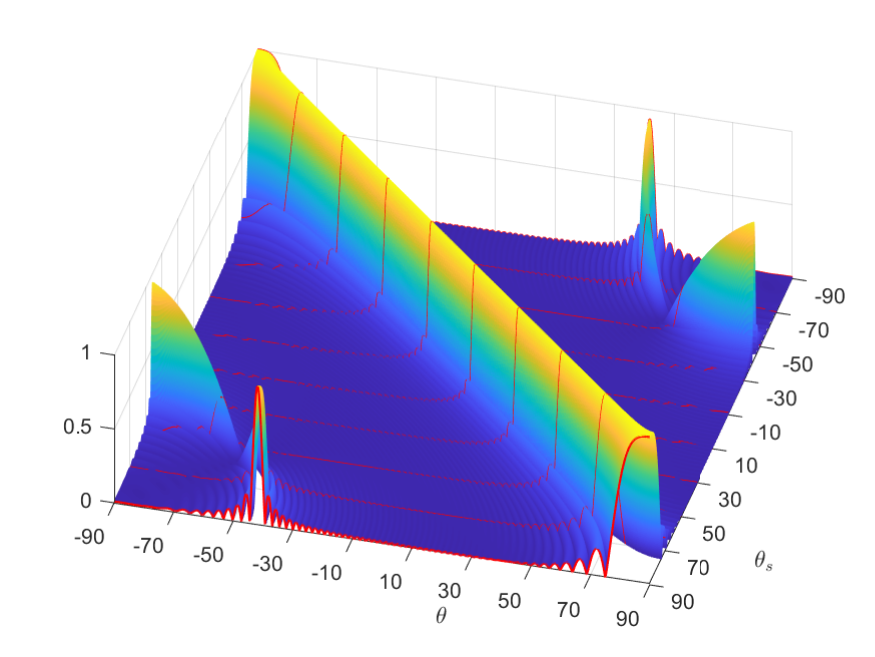}}
  \subfigure[]{
  \label{Fig:PatternB:Plot}
  \includegraphics[width=0.48\columnwidth]{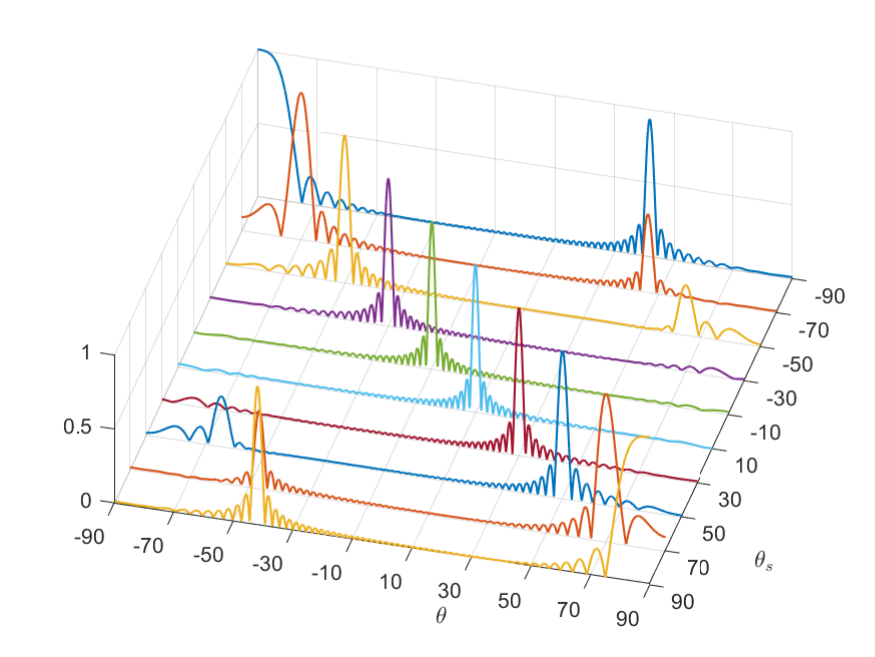}}
  \caption{Beam patterns for a dual linear topology with leading elements arranged in a regular triangular configuration, where $d = 0.6\lambda$. No grating lobes are observed within the region of interest $[-\pi/2, \pi/2]$.}
  \label{Fig:PatternB}
\end{figure}

\section{Fluctuations Induced by Spatial Perturbations}\label{S:IV}

In swarm antenna arrays, a challenge that cannot be overlooked is the perturbation of antenna element positions, primarily due to environmental disturbances or mechanical vibrations. These deviations degrade the coherence of the signal superposition and introduce random fluctuations into the radiation pattern. Understanding the effects of such perturbations is crucial for maintaining the performance of swarm-based antenna systems.

To model the array response under perturbations, we consider the following expression
\begin{equation}\label{E:PerturbedResponseTotal}
  \tilde{f}_{\mathbf{w}}(\theta) = \frac{1}{ \sum_{n=1}^{N} | w_{n} |} \sum_{n=1}^{N} w_{n} e^{j 2 \pi [ ( x_{n} + \Delta x_{n} ) \sin \theta + ( y_{n} + \Delta y_{n} ) \cos \theta ] / \lambda} .
\end{equation}
Here $\tilde{f}_{\mathbf{w}}(\theta)$ represents the distorted array response, and the perturbations $[\Delta x_n, \Delta y_n ]^T$ are assumed to follow a specified probability distribution. For analytical tractability, we adopt the phase-difference compensation weight design defined in \eqref{E:WeightSelection0}, which guarantees that the undisturbed response is coherently steered toward the desired direction $\theta_s$. Under this choice, the perturbed response at $\theta = \theta_s$ simplifies to
\begin{equation}\label{E:PerturbedResponse}
  \tilde{f}_{\mathbf{w}}(\theta_s) = \frac{1}{ \sum_{n=1}^{N} | w_{n} |} \sum_{n=1}^{N} | w_{n} | e^{j 2 \pi [ \Delta x_{n} \sin \theta_s + \Delta y_{n} \cos \theta_s ] / \lambda} .
\end{equation}
We now proceed to analyze the statistical properties of $\tilde{f}_{\mathbf{w}}(\theta_s)$.

\begin{thm}\label{T:SteerDirection}
Suppose the perturbations $[\Delta x_n, \Delta y_n ]^T$ are independently distributed according to a bivariate Gaussian distribution $\mathcal{N}(0, \Sigma_n)$. Then, the expected response at the steering angle $\theta_s$ is given by
\begin{equation}\label{E:MeanSteerDirection}
\begin{aligned}
\operatorname{E} \left[ \tilde{f}_{\mathbf{w}}(\theta_s) \right] = 
& \frac{1}{ \sum_{n=1}^{N} | w_{n} |} \sum_{n=1}^{N} | w_{n} | \\
& \exp \left( -\frac{2\pi^2}{\lambda^2} \begin{bmatrix} \sin\theta_s & \cos\theta_s \end{bmatrix} \Sigma_n \begin{bmatrix} \sin \theta_s \\ \cos \theta_s \end{bmatrix} \right).
\end{aligned}
\end{equation}
Moreover, the variance of the response at $\theta_s$ is given by
\begin{equation}\label{E:VarianceSteerDirection}
\begin{aligned}
\operatorname{Var} \left[ \tilde{f}_{\mathbf{w}}(\theta_s) \right] = 
  & \frac{1}{ \left( \sum_{n=1}^{N} | w_{n} | \right)^2} \sum_{n=1}^{N} | w_{n} |^2 \\
  & \left( 1 - \exp \left( -\frac{4 \pi^2}{\lambda^2} 
  \begin{bmatrix} \sin\theta_s & \cos\theta_s \end{bmatrix}
  \Sigma_n 
  \begin{bmatrix}
  \sin \theta_s \\ \cos \theta_s
  \end{bmatrix}
  \right) \right) .
\end{aligned}
\end{equation}
\end{thm}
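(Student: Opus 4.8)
The plan is to reduce both moments to a single scalar Gaussian identity. First I would absorb the direction dependence into the vector $\mathbf{u} = [\sin\theta_s,\ \cos\theta_s]^T$ and observe that the exponent attached to the $n$-th element in \eqref{E:PerturbedResponse} is the scalar $\phi_n = (2\pi/\lambda)\,\mathbf{u}^T [\Delta x_n,\ \Delta y_n]^T$. Because a linear functional of a Gaussian vector is again Gaussian, $\phi_n \sim \mathcal{N}(0, \sigma_n^2)$ with $\sigma_n^2 = (4\pi^2/\lambda^2)\,\mathbf{u}^T \Sigma_n \mathbf{u}$. The one fact driving the entire argument is the characteristic function of a centered Gaussian evaluated on the unit circle, $\operatorname{E}[e^{j\phi}] = e^{-\sigma^2/2}$ for $\phi \sim \mathcal{N}(0,\sigma^2)$.

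For the mean I would apply this identity termwise and invoke linearity of expectation. This gives $\operatorname{E}[e^{j\phi_n}] = \exp(-\sigma_n^2/2) = \exp\!\big(-(2\pi^2/\lambda^2)\,\mathbf{u}^T\Sigma_n\mathbf{u}\big)$, and weighting by the normalized coefficients $|w_n|/\sum_m |w_m|$ reproduces \eqref{E:MeanSteerDirection} directly. Note that only the marginal law of each $\phi_n$ is needed here; independence across $n$ plays no role in the mean.

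For the variance I would exploit the assumed independence of the $[\Delta x_n,\ \Delta y_n]^T$ across $n$. Writing $Z = \tilde{f}_{\mathbf{w}}(\theta_s) = \sum_n c_n e^{j\phi_n}$ with real coefficients $c_n = |w_n|/\sum_m |w_m|$, I would expand the complex variance $\operatorname{Var}[Z] = \operatorname{E}\big[|Z - \operatorname{E}[Z]|^2\big]$ into a double sum; each off-diagonal term factors under independence into a product carrying the zero-mean factor $\operatorname{E}[e^{j\phi_n} - \operatorname{E}[e^{j\phi_n}]] = 0$ and hence drops out, leaving $\operatorname{Var}[Z] = \sum_n c_n^2 \operatorname{Var}[e^{j\phi_n}]$. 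The per-element variance is then immediate: since $|e^{j\phi_n}| = 1$ one has $\operatorname{E}[|e^{j\phi_n}|^2] = 1$, while $|\operatorname{E}[e^{j\phi_n}]|^2 = e^{-\sigma_n^2}$ because the mean is real and positive, so $\operatorname{Var}[e^{j\phi_n}] = 1 - \exp\!\big(-(4\pi^2/\lambda^2)\,\mathbf{u}^T\Sigma_n\mathbf{u}\big)$. Substituting $c_n^2 = |w_n|^2/(\sum_m|w_m|)^2$ yields \eqref{E:VarianceSteerDirection}.

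The computation is otherwise routine, so I do not anticipate a genuine mathematical obstacle; the only points demanding care are bookkeeping ones. Specifically, I would be careful to use the complex-variance definition $\operatorname{E}[|Z-\operatorname{E}[Z]|^2]$ (not a naive real-variance), to verify that the cross-covariances truly vanish under independence, and to track the factor of two in the exponent—squaring the real mean $e^{-\sigma_n^2/2}$ produces $e^{-\sigma_n^2}$, which is precisely why the variance formula carries $4\pi^2/\lambda^2$ rather than $2\pi^2/\lambda^2$. These are the places most prone to a slip rather than to a conceptual difficulty.
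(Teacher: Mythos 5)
Your proposal is correct and follows essentially the same route as the paper's proof: the mean via the Gaussian characteristic function (the paper phrases it as the bivariate characteristic function of $[\Delta x_n, \Delta y_n]^T$, you reduce to the equivalent scalar Gaussian $\phi_n$ first), and the variance via independence across elements plus the identity $\operatorname{Var}[e^{j\phi_n}] = \operatorname{E}[|e^{j\phi_n}|^2] - |\operatorname{E}[e^{j\phi_n}]|^2 = 1 - e^{-\sigma_n^2}$. Your treatment is in fact slightly more explicit than the paper's, since you justify the vanishing of the off-diagonal covariance terms rather than asserting the variance decomposition directly.
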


\begin{proof}
We begin by evaluating the expectation of the complex exponential term $e^{j 2 \pi [ \Delta x_{n} \sin \theta_s + \Delta y_{n} \cos \theta_s ] / \lambda}$. This expression corresponds to the characteristic function of the random vector $[\Delta x_n, \Delta y_n ]^T$. Since $[\Delta x_n, \Delta y_n ]^T \sim \mathcal{N}(0, \Sigma_n)$, we obtain
\begin{multline}
\operatorname{E} \left[ e^{j 2 \pi [ \Delta x_{n} \sin \theta_s + \Delta y_{n} \cos \theta_s ] / \lambda} \right]  \\
= \exp \left( -\frac{2\pi^2}{\lambda^2} 
  \begin{bmatrix} \sin\theta_s & \cos\theta_s \end{bmatrix}
  \Sigma_n 
  \begin{bmatrix}
  \sin\theta_s \\ \cos\theta_s
  \end{bmatrix}
  \right).
\end{multline}
Substituting this result into the expression for $\tilde{f}_{\mathbf{w}}(\theta_s)$, as given in \eqref{E:PerturbedResponse}, yields the expected value in \eqref{E:MeanSteerDirection}.

Next, we compute the variance of \( \tilde{f}_{\mathbf{w}}(\theta_s) \). Since the perturbations across antenna elements are assumed to be independent, the total variance is given by
\begin{multline}
\operatorname{Var} \left[ \tilde{f}_{\mathbf{w}}(\theta_s) \right] 
= \frac{1}{\left( \sum_{n=1}^{N} |w_n| \right)^2} \sum_{n=1}^{N} |w_n|^2 \\
\operatorname{Var}  \left[ e^{ j 2 \pi [ \Delta x_n \sin \theta_s + \Delta y_n \cos \theta_s ]  / \lambda } \right].
\end{multline}
To evaluate the individual variances, we use the identity
\[
\begin{aligned}
& \operatorname{Var} \left[ e^{j 2 \pi [ \Delta x_{n} \sin \theta_s + \Delta y_{n} \cos \theta_s ] / \lambda} \right] \\
= & \operatorname{E} \left[ | e^{j 2 \pi [ \Delta x_{n} \sin \theta_s + \Delta y_{n} \cos \theta_s ] / \lambda} |^2 \right] \\
& - \left| \operatorname{E} [ e^{j 2 \pi [ \Delta x_{n} \sin \theta_s + \Delta y_{n} \cos \theta_s ] / \lambda} ] \right|^2 \\
= & 1 - 
\exp \left( -\frac{4 \pi^2}{\lambda^2} 
  \begin{bmatrix} \sin\theta_s & \cos\theta_s \end{bmatrix}
  \Sigma_n 
  \begin{bmatrix}
  \sin\theta_s \\ \cos\theta_s
  \end{bmatrix}
  \right).
\end{aligned}
\]
Substituting this result yields the variance expression in \eqref{E:VarianceSteerDirection}, completing the proof.
\end{proof}

We next investigate the statistical characteristics of the array response at non-steered directions. Unlike the coherent signal superposition observed at the steered direction, the response at non-steered angles exhibits a more complex pattern due to incoherent signal summation arising from phase misalignment among the elements.

To analyze this behavior, we consider the complex exponential term in \eqref{E:PerturbedResponseTotal}. When the perturbations are sufficiently small, a first-order Taylor expansion provides a valid approximation of the perturbed response. Applying this expansion with respect to $\Delta x$ and $\Delta y$ yields 
\begin{equation*}
\begin{aligned}
& e^{j 2 \pi [ ( x + \Delta x ) \sin \theta + (y + \Delta y) \cos \theta ] / \lambda} \\
\approx & e^{j 2 \pi [ x \sin \theta + y \cos \theta ] / \lambda} \left( 1 + j 2 \pi / \lambda \sin \theta \Delta x + j 2 \pi / \lambda \cos \theta \Delta y \right) .
\end{aligned}
\end{equation*}

Substituting the Taylor expansion into \eqref{E:PerturbedResponseTotal} yields the approximation $\tilde{f}_{\mathbf{w}}(\theta) \approx f_{\mathbf{w}}(\theta) + \Delta f_{\mathbf{w}}(\theta)$, as presented in \eqref{E:PerturbedBeamPattern}, where $f_{\mathbf{w}}(\theta)$ represents the ideal (unperturbed) array response, and $\Delta f_{\mathbf{w}}(\theta)$ captures the distortion induced by spatial perturbations. A key advantage of the formulation in \eqref{E:PerturbedBeamPattern} is its ability to decouple the nominal steering response from perturbation-induced distortion. This linearized model is particularly useful for analyzing the sensitivity of array performance to small positional deviations. The following theorem characterizes the statistical properties of $\Delta f_{\mathbf{w}}(\theta)$.

\begin{figure*}[htbp]
\begin{equation}\label{E:PerturbedBeamPattern}  
    \tilde{f}_{\mathbf{w}}(\theta) \approx f_{\mathbf{w}}(\theta)  + \underbrace{ \frac{j 2 \pi}{ \lambda \sum_{n=1}^{N} | w_{n} |} \sum_{n=1}^{N} \left(  w_n e^{j 2 \pi [ x_n \sin \theta + y_n \cos \theta ] / \lambda} \right) ( \sin \theta \Delta x_n + \cos \theta \Delta y_n ) }_{\Delta f_{\mathbf{w}}(\theta) } .
\end{equation}
\begin{equation}\label{E:Mean}
  \operatorname{E} [ \Delta f_{\mathbf{w}}(\theta) ] 
= \operatorname{E} \left[ \frac{j 2 \pi}{ \lambda \sum_{n=1}^{N} | w_{n} |} \sum_{n=1}^{N} \left(  w_n e^{j 2 \pi [ x_n \sin \theta + y_n \cos \theta ] / \lambda} \right) ( \sin \theta \Delta x_n + \cos \theta \Delta y_n ) \right] 
= 0 .
\end{equation}
\begin{equation}\label{E:Variance}
\begin{aligned}
\operatorname{Var} \left[ \Delta f_{\mathbf{w}}(\theta) \right]
& = \operatorname{Var} \left[ \frac{j 2 \pi}{ \lambda \sum_{n=1}^{N} | w_{n} |} \sum_{n=1}^{N} \left(  w_n e^{j 2 \pi [ x_n \sin \theta + y_n \cos \theta ] / \lambda} \right) ( \sin \theta \Delta x_n + \cos \theta \Delta y_n ) \right] \\
& = \frac{ (2 \pi)^2 }{ \lambda^2 } \frac{ \sum_{n=1}^{N} |w_n|^2 
\operatorname{Var} \left[ \sin \theta \Delta x_n + \cos \theta \Delta y_n \right] }{ \left( \sum_{n=1}^{N} |w_n| \right)^2} 
= \frac{ (2 \pi)^2 }{ \lambda^2 } \frac{ \sum_{n=1}^{N} |w_n|^2 \begin{bmatrix}\sin \theta & \cos \theta \end{bmatrix} \Sigma_n \begin{bmatrix}\sin \theta \\ \cos \theta \end{bmatrix} }{ \left( \sum_{n=1}^{N} |w_n| \right)^2} .
\end{aligned}
\end{equation}
\hrule
\end{figure*}

\begin{thm}\label{T:PerturbedPattern}
Suppose the perturbations $[\Delta x_n, \Delta y_n ]^T$ are sufficiently small and independently follow a bivariate Gaussian distribution $\mathcal{N}(0, \Sigma_n)$. Then, for $\theta \neq \theta_s$, the perturbation-induced fluctuation $\Delta f_{\mathbf{w}}(\theta)$ is approximately Gaussian with zero mean. Its variance is given by
\begin{equation}\label{E:Variance2}
\operatorname{Var} \left[ \Delta f_{\mathbf{w}}(\theta) \right] 
= \frac{ (2 \pi)^2 }{ \lambda^2 } \frac{ \sum_{n=1}^{N} |w_n|^2 \begin{bmatrix}\sin \theta & \cos \theta \end{bmatrix} \Sigma_n \begin{bmatrix}\sin \theta \\ \cos \theta \end{bmatrix} }{ \left( \sum_{n=1}^{N} |w_n| \right)^2} .
\end{equation}
\end{thm}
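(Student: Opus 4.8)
The plan is to treat the fluctuation $\Delta f_{\mathbf{w}}(\theta)$ defined by the underbrace in \eqref{E:PerturbedBeamPattern} as a linear statistic of the perturbation variables and to read off its law directly from the Gaussianity and independence hypotheses. Introducing the projected displacement $g_n := \sin\theta\,\Delta x_n + \cos\theta\,\Delta y_n$ and the deterministic complex coefficient $a_n := \frac{j 2\pi}{\lambda \sum_{m}|w_m|}\, w_n e^{j 2\pi [x_n \sin\theta + y_n \cos\theta]/\lambda}$, the linearized fluctuation collapses to the compact form $\Delta f_{\mathbf{w}}(\theta) = \sum_{n=1}^{N} a_n g_n$. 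The entire argument rests on this representation, which is valid in the small-perturbation regime where the first-order Taylor truncation preceding \eqref{E:PerturbedBeamPattern} is accurate.

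First I would establish the zero-mean property. Since $[\Delta x_n,\Delta y_n]^T \sim \mathcal{N}(0,\Sigma_n)$, each $g_n$ is a scalar centered Gaussian, so $\operatorname{E}[g_n]=0$. Linearity of expectation then gives $\operatorname{E}[\Delta f_{\mathbf{w}}(\theta)] = \sum_n a_n \operatorname{E}[g_n] = 0$, reproducing \eqref{E:Mean}.

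Next I would compute the variance and justify the Gaussian approximation together. Because the displacements are independent across elements, the $g_n$ are independent, the cross terms vanish, and $\operatorname{Var}[\Delta f_{\mathbf{w}}(\theta)] = \sum_n |a_n|^2 \operatorname{Var}[g_n]$. Here $|a_n|^2 = \frac{(2\pi)^2}{\lambda^2}\, |w_n|^2 / (\sum_m |w_m|)^2$, using $|j|=1$ and $|e^{j(\cdot)}|=1$, while $\operatorname{Var}[g_n] = \begin{bmatrix}\sin\theta & \cos\theta\end{bmatrix}\Sigma_n \begin{bmatrix}\sin\theta \\ \cos\theta\end{bmatrix}$ is the quadratic form of $\Sigma_n$ along the observation direction; substituting yields exactly \eqref{E:Variance2}. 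For the distributional claim, note that each $g_n$ is a linear functional of the jointly Gaussian vector $[\Delta x_n,\Delta y_n]^T$ and hence Gaussian, and that the real and imaginary parts of $\Delta f_{\mathbf{w}}(\theta)$ are real linear combinations of the independent Gaussians $\{g_n\}$, hence jointly Gaussian. Thus the linearized fluctuation is a (complex) Gaussian variable exactly, and the sole origin of the ``approximately'' qualifier is the neglected higher-order Taylor remainder.

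The main obstacle is conceptual rather than computational: pinning down what ``Gaussian'' means for the complex-valued $\Delta f_{\mathbf{w}}(\theta)$ and why the restriction $\theta \neq \theta_s$ matters. One must interpret Gaussianity as joint normality of the real and imaginary parts and read $\operatorname{Var}[\cdot]$ as the total variance $\operatorname{E}[|\Delta f_{\mathbf{w}}(\theta)|^2]$ rather than a per-component or circularly-symmetric quantity. The exclusion $\theta \neq \theta_s$ is not a technical artifact: at $\theta = \theta_s$ the phase-compensation weights make every coefficient $a_n$ purely imaginary, so the first-order term is purely imaginary with zero mean and cannot capture the genuine, second-order degradation of the main lobe---precisely the effect obtained exactly in Theorem~\ref{T:SteerDirection}. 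Away from $\theta_s$ the first-order term dominates, so the linearized Gaussian description is the leading-order truth; making the control of the Taylor remainder uniform over such $\theta$ is where additional care would be required for a fully rigorous statement, and it is also the place where a central-limit argument would become relevant were the perturbations not themselves Gaussian.
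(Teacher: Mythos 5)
Your proof is correct and follows essentially the same route as the paper's: write $\Delta f_{\mathbf{w}}(\theta)$ as a linear combination of the independent Gaussian projections $\sin\theta\,\Delta x_n + \cos\theta\,\Delta y_n$, invoke closure of Gaussianity under linear transformations, and compute the mean and variance term by term exactly as in \eqref{E:Mean} and \eqref{E:Variance}. Your added precisions---reading $\operatorname{Var}[\cdot]$ as $\operatorname{E}[|\cdot|^2]$ for the complex-valued fluctuation, and observing that at $\theta=\theta_s$ the phase-compensated coefficients become purely imaginary so the first-order term cannot capture the second-order main-lobe degradation handled exactly in Theorem~\ref{T:SteerDirection}---sharpen the paper's terse proof but do not alter the argument.
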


\begin{proof}
For any given $\theta$, $\Delta f_{\mathbf{w}}(\theta)$ is a linear combination of Gaussian random variables, as shown in \eqref{E:PerturbedBeamPattern}. By the closure property of Gaussian distributions under linear transformations, it follows that $\Delta f_{\mathbf{w}}(\theta)$ is itself Gaussian. The zero-mean property and the variance expression follow directly from the derivations in \eqref{E:Mean} and \eqref{E:Variance}, respectively. 
\end{proof}

To gain further insight into Theorems~\ref{T:SteerDirection} and \ref{T:PerturbedPattern}, we consider a specific scenario in which the phase-difference compensation strategy in \eqref{E:WeightSelection1} is applied, i.e., $|w_n| = 1$, and the positional perturbations in the $x$- and $y$-directions are assumed to be independent.

\begin{cor}\label{T:SteerDirection2}
Suppose the perturbations $[\Delta x_n, \Delta y_n ]^T$ are sufficiently small and independently drawn from an isotropic bivariate Gaussian distribution $\mathcal{N}(0, \sigma^2 I)$. If the phase-difference compensation strategy defined in \eqref{E:WeightSelection1} is employed, then the expected array response at the steering angle $\theta_s$ is given by
\begin{equation}\label{E:MeanSteerDirection2}
\begin{aligned}
\operatorname{E} \left[ \tilde{f}_{\mathbf{w}}(\theta_s) \right] = \exp \left( - 2 \pi^2 \sigma^2 /\lambda^2 \right).
\end{aligned}
\end{equation}
Moreover, the variance of the response at $\theta_s$ is
\begin{equation}\label{E:VarianceSteerDirection2}
\begin{aligned}
\operatorname{Var} \left[ \tilde{f}_{\mathbf{w}}(\theta_s) \right] = 
  & \frac{1}{ N } \left( 1 - \exp \left( - 4 \pi^2 \sigma^2 /\lambda^2 \right) \right) ,
\end{aligned}
\end{equation}
where $N$ denotes the total number of antenna elements. For directions $\theta \neq \theta_s$, the perturbation-induced fluctuation $\Delta f_{\mathbf{w}}(\theta)$ is approximately Gaussian with variance
\begin{equation}\label{E:Variance3}
\operatorname{Var} \left[ \Delta f_{\mathbf{w}}(\theta) \right] = \frac{(2 \pi)^2}{N} \left( \frac{ \sigma }{ \lambda } \right)^2 .
\end{equation}
Furthermore, the tail probability satisfies the bound
\begin{equation}\label{E:TailBound}
\operatorname{P} \left( | \Delta f_{\mathbf{w}}(\theta) | \ge t \right) \le 2 \exp \left( -\frac{ t^2 N }{ 2 (2 \pi)^2 ( \sigma / \lambda )^2 } \right) .
\end{equation}
\end{cor}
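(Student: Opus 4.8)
The plan is to read this corollary as a specialization of Theorems~\ref{T:SteerDirection} and \ref{T:PerturbedPattern} under the two structural assumptions $|w_n|=1$ and $\Sigma_n=\sigma^2 I$, with only the tail bound \eqref{E:TailBound} requiring a genuinely new argument. For the first three identities I would substitute directly into \eqref{E:MeanSteerDirection}, \eqref{E:VarianceSteerDirection}, and \eqref{E:Variance2}. The decisive simplification is that the quadratic form collapses: for $\mathbf{u}_\theta=[\sin\theta,\cos\theta]^{\top}$ we have $\mathbf{u}_\theta^{\top}(\sigma^2 I)\mathbf{u}_\theta=\sigma^2(\sin^2\theta+\cos^2\theta)=\sigma^2$ for every $\theta$, so each exponential factor becomes independent of $n$ and of the angle. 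Together with $\sum_n|w_n|=N$ and $\sum_n|w_n|^2=N$, the mean \eqref{E:MeanSteerDirection} reduces to the common factor $\exp(-2\pi^2\sigma^2/\lambda^2)$ as in \eqref{E:MeanSteerDirection2}; the steering-direction variance \eqref{E:VarianceSteerDirection} picks up $N/N^2=1/N$, giving \eqref{E:VarianceSteerDirection2}; and the off-steering variance \eqref{E:Variance2} becomes $\tfrac{(2\pi)^2}{\lambda^2}\,\tfrac{N\sigma^2}{N^2}=\tfrac{(2\pi)^2}{N}(\sigma/\lambda)^2$, which is \eqref{E:Variance3}. These are routine and I would present them briefly.

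For the tail bound I would first record a structural observation. Under $\Sigma_n=\sigma^2 I$ the scalar $g_n:=\sin\theta\,\Delta x_n+\cos\theta\,\Delta y_n$ is itself $\mathcal{N}(0,\sigma^2)$ (again because $\sin^2\theta+\cos^2\theta=1$), and the $g_n$ are independent across $n$. Hence, from \eqref{E:PerturbedBeamPattern}, the fluctuation is an \emph{exact} linear functional of independent Gaussians, $\Delta f_{\mathbf{w}}(\theta)=\sum_{n=1}^{N}c_n g_n$, with coefficients $c_n=\tfrac{j2\pi}{\lambda N}w_n e^{j2\pi[x_n\sin\theta+y_n\cos\theta]/\lambda}$ satisfying $|c_n|=2\pi/(\lambda N)$ and $\sum_n|c_n|^2=(2\pi)^2/(\lambda^2 N)$. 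This immediately reproduces the variance, $\operatorname{Var}[\Delta f_{\mathbf{w}}(\theta)]=\sigma^2\sum_n|c_n|^2=(2\pi)^2\sigma^2/(\lambda^2 N)=:v$, and the Gaussianity asserted in the corollary is inherited directly from Theorem~\ref{T:PerturbedPattern}. The exponent in \eqref{E:TailBound} is exactly $-t^2/(2v)$, i.e. the standard two-sided sub-Gaussian tail for a variable of variance $v$.

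To obtain \eqref{E:TailBound} I would then apply a Chernoff (moment-generating-function) bound, exploiting that for every fixed unit-modulus $u\in\mathbb{C}$ the real projection $\operatorname{Re}(\bar u\,\Delta f_{\mathbf{w}}(\theta))=\sum_n\operatorname{Re}(\bar u c_n)\,g_n$ is a zero-mean real Gaussian with variance $\sigma^2\sum_n\operatorname{Re}(\bar u c_n)^2\le\sigma^2\sum_n|c_n|^2=v$, hence sub-Gaussian with parameter at most $v$. I expect the main obstacle to be precisely the passage from these one-dimensional projections to the modulus: $\Delta f_{\mathbf{w}}(\theta)$ is complex-valued, so $|\Delta f_{\mathbf{w}}(\theta)|$ is the norm of a two-dimensional Gaussian whose real and imaginary parts need not be circularly symmetric (their covariance depends on the phases of the $c_n$). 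A crude union bound over the two orthogonal components loses a constant factor and degrades the exponent, so recovering exactly the constant $2$ and the variance $v$ (rather than $v/2$) requires working in the eigenbasis of the $2\times2$ covariance matrix $\Gamma$ and using that $\operatorname{trace}\Gamma=v$ while $\|\Gamma\|_{\mathrm{op}}\le v$. In the large-array regime the phases effectively randomize, rendering $\Gamma$ approximately isotropic with each component variance close to $v/2$, in which case $|\Delta f_{\mathbf{w}}(\theta)|^2$ is nearly a scaled $\chi^2_2$ variable and the bound $2\exp(-t^2/(2v))$ is comfortably satisfied; I would present the tail bound in this sub-Gaussian form and flag the non-isotropic covariance as the point needing care.
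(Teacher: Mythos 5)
Your specialization of Theorems~\ref{T:SteerDirection} and \ref{T:PerturbedPattern} is exactly the paper's route: its entire proof is the one-line remark that the corollary follows from $|w_n|=1$ and $\Sigma_n=\sigma^2 I$, and your substitutions (the quadratic form collapsing to $\sigma^2$ for every $\theta$, and $\sum_n|w_n|=\sum_n|w_n|^2=N$) are precisely the intended, and correct, computations for \eqref{E:MeanSteerDirection2}, \eqref{E:VarianceSteerDirection2}, and \eqref{E:Variance3}.

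Where you go beyond the paper is the tail bound \eqref{E:TailBound}, and the issue you flag is genuine: $\Delta f_{\mathbf{w}}(\theta)$ is complex-valued, the paper silently quotes what is in form a real-Gaussian two-sided tail, and the covariance of the real and imaginary parts need not be circular. However, you should not leave this point merely ``flagged,'' and the large-$N$ phase-randomization argument is a heuristic, not a proof --- the bound is asserted for every fixed array geometry. The tools you yourself name (eigenbasis of the $2\times 2$ covariance $\Gamma$, $\operatorname{trace}\Gamma = v$, $\|\Gamma\|_{\mathrm{op}}\le v$) close the gap in three lines, with no approximation and a constant better than $2$. Write $\Delta f_{\mathbf{w}}(\theta)=X+jY$ and let $\Gamma$ be the covariance of the real vector $(X,Y)^{\top}$, so $\operatorname{trace}\Gamma=v=(2\pi)^2\sigma^2/(\lambda^2N)$. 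In distribution $(X,Y)^{\top}=\Gamma^{1/2}(W_1,W_2)^{\top}$ with $W_1,W_2$ independent standard normals, hence $|\Delta f_{\mathbf{w}}(\theta)|^2\le\|\Gamma\|_{\mathrm{op}}\,(W_1^2+W_2^2)$, and since $W_1^2+W_2^2\sim\chi^2_2$ has the exact tail $\operatorname{P}(W_1^2+W_2^2\ge u)=e^{-u/2}$,
\[
\operatorname{P}\left( |\Delta f_{\mathbf{w}}(\theta)|\ge t \right)
\le \exp\left( -\frac{t^2}{2\|\Gamma\|_{\mathrm{op}}} \right)
\le \exp\left( -\frac{t^2}{2v} \right),
\]
where the last step uses $\|\Gamma\|_{\mathrm{op}}\le\operatorname{trace}\Gamma=v$ for positive semidefinite $\Gamma$. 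This is \eqref{E:TailBound} with constant $1$ in place of $2$, valid uniformly in the anisotropy of $\Gamma$; the rank-one (maximally non-circular) case, which worried you, is covered with no loss. So your plan is correct and matches the paper on the routine parts; simply replace your closing heuristic with this domination argument and the proof is complete.
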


\begin{proof}
  The corollary follows directly from the assumptions $|w_n| = 1$ and $\Sigma_n = \sigma^2 I$.
\end{proof}

\begin{rem}
The significance of Theorems~\ref{T:SteerDirection} and \ref{T:PerturbedPattern}, along with Corollary~\ref{T:SteerDirection2}, lies in their explicit characterization of how perturbation-induced fluctuations depend on key system parameters, including the number of elements $N$, the wavelength $\lambda$, the perturbation strength $\sigma^2$, and the choice of weight design. Notably, \eqref{E:MeanSteerDirection} and \eqref{E:MeanSteerDirection2} demonstrate that spatial perturbations lead to a measurable degradation in the array response at the steered direction. Importantly, this degradation is governed solely by the perturbation strength and cannot be mitigated by increasing the number of antenna elements. The primary benefit of scaling up the array lies in reducing the variance of perturbation-induced fluctuations, as quantified by \eqref{E:VarianceSteerDirection2} and \eqref{E:Variance3}. In other words, as the array grows larger, the collective behavior becomes increasingly stable and predictable.
\end{rem}

To validate performance under positional perturbations, we apply independent Gaussian noise with a standard deviation of $0.1\lambda$ along both the x- and y-directions to the regular triangular configuration shown in Fig.~\ref{Fig:GreaterHalfWavelength:0_5774}. The corresponding beam patterns are presented in Fig.~\ref{Fig:PatternPerturb}, where fluctuations are observed across the entire steering range, although the overall pattern remains largely preserved.

\begin{figure}[!htbp]
  \centering
  \subfigure[]{
  \label{Fig:PatternPerturb:Mesh}
  \includegraphics[width=0.48\columnwidth]{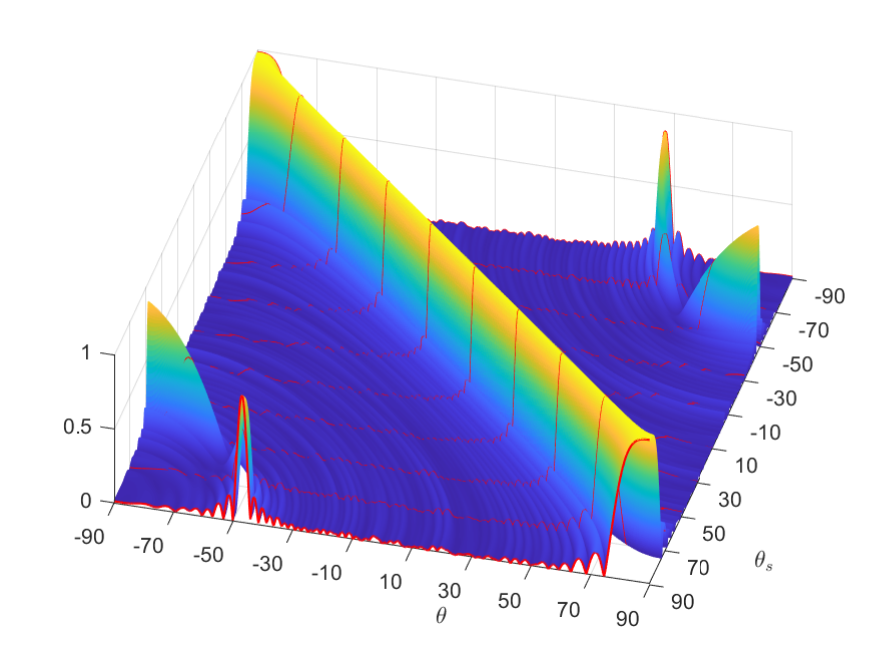}}
  \subfigure[]{
  \label{Fig:PatternPerturb:Plot}
  \includegraphics[width=0.48\columnwidth]{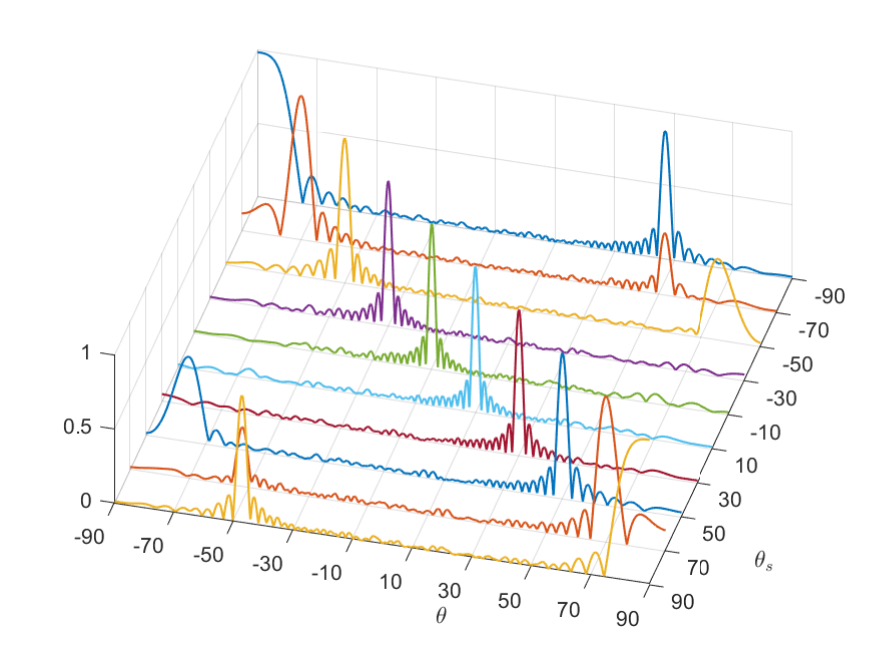}}
  \caption{Beam patterns for a dual linear topology with leading elements arranged in a regular triangular configuration, where $d = \sqrt{3} \lambda / {3}$. Independent Gaussian noise with a standard deviation of $0.1\lambda$ is applied along both the x- and y-directions. Fluctuations are observed across the entire steering range, although the overall pattern remains largely preserved.}
  \label{Fig:PatternPerturb}
\end{figure}

We validate the impact of the total number of antenna elements $N$ on the fluctuation of the array response. As indicated by \eqref{E:VarianceSteerDirection2} and \eqref{E:Variance3}, the variance of perturbation-induced fluctuations decreases inversely with $N$. This result implies that larger arrays yield more stable performance, even when the perturbation level at each antenna remains constant. To support this conclusion, numerical experiments are conducted, in which the absolute fluctuation $|f_{\mathbf{w}}(\theta) - \tilde{f}_{\mathbf{w}}(\theta)|$ is recorded over 500 trials. The averaged results are presented in Fig.~\ref{F:Unperturbed_Perturbed_Error_500} clearly demonstrating that increasing the number of elements suppresses random fluctuations at non-steered directions. Notably, at the steered direction $\theta_s = 0^\circ$, the absolute fluctuation remains at a similar level, confirming that the degradation at the steering angle cannot be mitigated by increasing the number of antenna elements, as established in \eqref{E:MeanSteerDirection2}.

\begin{figure}[htbp]
  \centering
  \includegraphics[width=0.8\columnwidth]{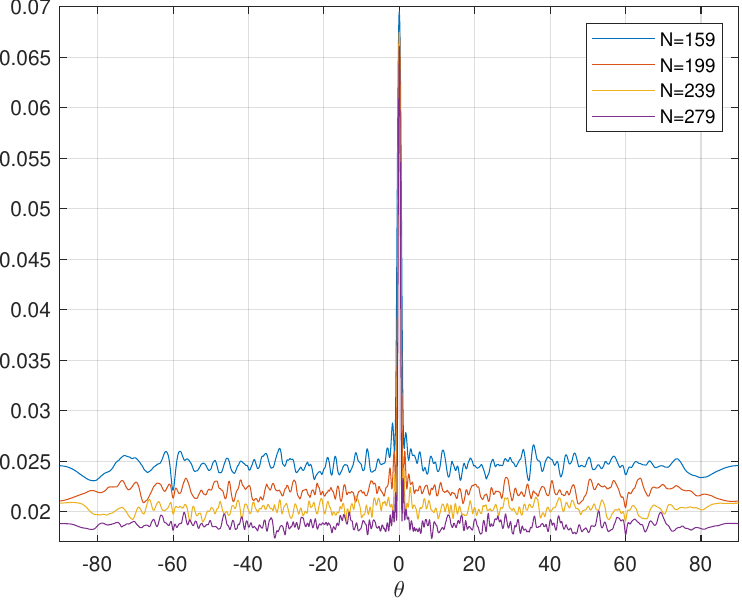}
  \caption{Averaged absolute fluctuation $|f_{\mathbf{w}}(\theta) - \tilde{f}_{\mathbf{w}}(\theta)|$ over 500 trials.}
  \label{F:Unperturbed_Perturbed_Error_500}
\end{figure}

\section{Emergent Deterministic Behavior in Disordered Antenna Arrays}\label{S:V}

In this section, we investigate the deterministic behavior inherent in the Euclidean random matrix associated with disordered swarm antenna arrays---a property that, although not widely explored, could prove valuable for communication and sensing tasks. In many practical scenarios, the positions of individual antenna elements are not fixed and may be randomly distributed within a specified range. In such cases, certain aggregate measurements exhibit statistical regularities that asymptotically converge to deterministic patterns. This phenomenon reflects a fundamental principle of large-scale systems, where microscopic randomness gives rise to macroscopic order.

\begin{figure}[htbp]
  \centering
  \includegraphics[width=0.7\columnwidth]{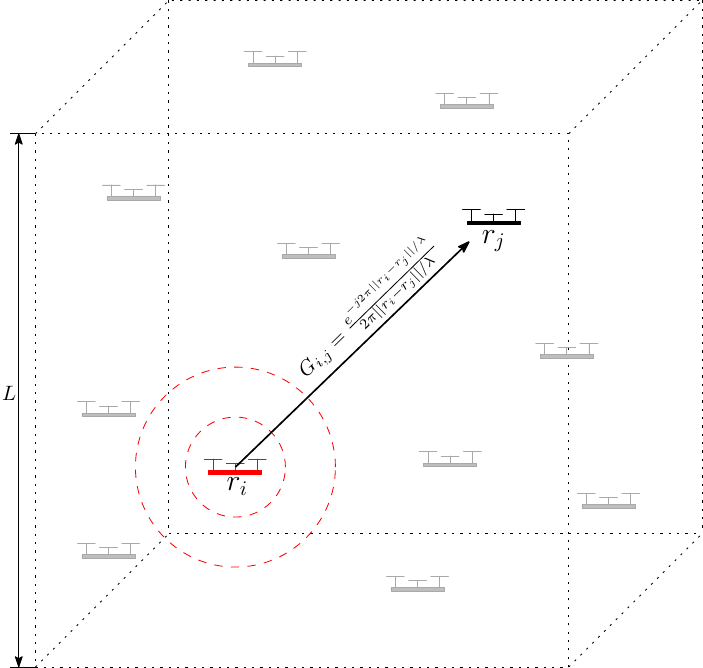}
  \caption{The positions of individual antenna elements are not fixed on regular grids but are randomly distributed within the cube.}
  \label{F:Cube}
\end{figure}

\begin{defn}[Euclidean Random Matrices]
Let $\{ \mathbf{r}_n = [x_n, y_n, z_n]^T \}_{n=1}^N$ denote the coordinates of $N$ antenna elements randomly distributed in a three-dimensional region of volume $V$, typically according to a uniform distribution. The Euclidean random matrix $G_N \in \mathbb{C}^{N \times N}$ is defined element-wise as
\[
G_N (i,j) = \frac{ \exp (-j 2 \pi \Vert \mathbf{r}_i - \mathbf{r}_j \Vert / \lambda ) }{ - 2 \pi \Vert \mathbf{r}_i - \mathbf{r}_j \Vert / \lambda } ,\quad \text{for} \ i\neq j,
\]
and
\[
  G_{i,i} = 0.
\]
\end{defn}

The Euclidean random matrix is closely related to the free-space Green's function, differing only by a scaling factor. Indeed, any Euclidean random matrix can be decomposed into its real and imaginary parts. For $i \neq j$, we write
\begin{equation*}
\begin{aligned}
G_N (i,j) = & C_N (i,j) + j S_N (i,j) \\
= & \frac{ \cos (-2 \pi \Vert \mathbf{r}_i - \mathbf{r}_j \Vert / \lambda ) }{ - 2 \pi \Vert \mathbf{r}_i - \mathbf{r}_j \Vert / \lambda } + j \frac{ \sin (-2 \pi \Vert \mathbf{r}_i - \mathbf{r}_j \Vert / \lambda ) }{ - 2 \pi \Vert \mathbf{r}_i - \mathbf{r}_j \Vert / \lambda }
\end{aligned}
\end{equation*}
A key theoretical result is that, under suitable statistical assumptions, the empirical spectral distributions of both $C_N$ and $S_N$ converge to deterministic limiting distributions as $N \to \infty$. 

\begin{defn}[Empirical Spectral Distribution]
The empirical spectral distribution of a Hermitian matrix $M_N \in \mathbb{C}^{N \times N}$ is defined as
\[
\mu_{M_N}(x) \equiv \frac{1}{N} \sum_{i=1}^{N} 1_{ \lambda_i(M_N) \le x }
\]
where $\lambda_1(M_N) \le \cdots \le \lambda_N(M_N)$ are the (necessarily real) eigenvalues of $M_N$, each counted according to its algebraic multiplicity.
\end{defn}

In the subsequent proposition, we let $L$ denote the side length of a cubic region, and define the antenna density as $\rho = N / L^3$.

\begin{prop}
Let $S_N$ denote the imaginary part of the Euclidean random matrix $G_N$, and define the parameter $\beta = 2.8N / (2 \pi L / \lambda)^2$. If $\beta < 1$, then the empirical spectral distribution of $S_N$ converges, as $N \to \infty$, to a deterministic distribution given by the Mar\v{c}enko-Pastur law with density
\[
d \mu (x) = \frac{1}{2 \pi \beta x} \sqrt{(x-a)^{+}(b-x)^{+}}
\]
where $a = (1 - \sqrt{\beta})^2$, and $b = (1 + \sqrt{\beta})^2$, $a^{+} = \max (0, a)$. For $\beta > 1$, the Mar\v{c}enko-Pastur distribution no longer accurately describes the spectral behavior of $S_N$.

On the other hand, for the real part $C_N$ of $G_N$, in the low-density limit where $\rho \lambda^3 \ll 1$ and $\beta \ll 1$, the empirical spectral distribution of $C_N$ converges to the Wigner semi-circle law with density
\[
d \mu (x) = \frac{ \sqrt{4 \beta - x }  }{ 2 \pi \beta} .
\]
In the opposite regime, where $\rho \lambda^3 \ll 1$ and $\beta \gg 1$, the empirical spectral distribution of $C_N$ converges to the Cauchy distribution with density
\[
  d \mu (x) = \frac{ 1 } { \pi (1 + x^2) } .
\]
\end{prop}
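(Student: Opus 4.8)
The plan is to treat the two Hermitian parts separately through their Stieltjes transforms $g_{\bullet}(z) = \frac{1}{N}\operatorname{E}\operatorname{tr}(\bullet - zI)^{-1}$, derive a self-consistent (cavity-type) equation in each case, and read off the limiting density from its solution. Writing $k = 2\pi/\lambda$ and $r_{ij} = \|\mathbf{r}_i - \mathbf{r}_j\|$, the relevant kernels are $S_N(i,j) = \sin(k r_{ij})/(k r_{ij})$ and $C_N(i,j) = -\cos(k r_{ij})/(k r_{ij})$. The structural difference that organizes the whole proof is that $S_N$ is, up to a diagonal normalization, positive semidefinite, whereas $C_N$ inherits the $1/r$ singularity of the kernel; these two features force two genuinely different limiting objects.

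For $S_N$ I would first use the plane-wave (angular spectrum) identity $\sin(kr)/(kr) = \frac{1}{4\pi}\int_{S^2} e^{ik\hat{\mathbf{n}}\cdot\mathbf{r}}\,d\Omega(\hat{\mathbf{n}})$, which exhibits $S_N = \frac{1}{4\pi}\int_{S^2}\mathbf{v}(\hat{\mathbf{n}})\,\mathbf{v}(\hat{\mathbf{n}})^{*}\,d\Omega$ with $\mathbf{v}(\hat{\mathbf{n}})_i = e^{ik\hat{\mathbf{n}}\cdot\mathbf{r}_i}$; note this representation fixes the diagonal at $\frac{1}{4\pi}\int_{S^2}d\Omega = 1$, so the positive-semidefinite object to which the result applies is $S_N + I$, not the zero-diagonal matrix of the definition. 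The second step is mode counting: points confined to a box of side $L$ can resolve only $M \approx (kL)^2/2.8$ independent directions on the sphere (the diffraction limit $\lambda/L$), so after discretizing the angular integral $S_N$ behaves like a Wishart matrix $\tfrac{1}{M}XX^{*}$ of aspect ratio $\beta = N/M = 2.8N/(2\pi L/\lambda)^2$, the constant $2.8$ being exactly the effective-mode normalization. The third step invokes the Mar\v{c}enko--Pastur theorem for this ensemble: for $\beta<1$ the matrix has full effective rank and its ESD converges to the stated density on $[(1-\sqrt{\beta})^2,(1+\sqrt{\beta})^2]$, while for $\beta>1$ the effective rank $M$ is exceeded, an atom of mass $1-1/\beta$ appears at the origin, and the angular correlations discarded in the discretization become $O(1)$ — which is precisely why the pure MP law ceases to hold.

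For $C_N$ I would set up the resolvent self-consistent equation directly, using the low-density hypothesis $\rho\lambda^3\ll 1$ to argue that the inter-element correlations decorrelate and the cavity/single-defect resummation closes, following the Euclidean-random-matrix programme. When $\beta\ll 1$ the entries have finite variance and the equation reduces to the Wigner form $g = 1/(-z-\beta g)$, whose solution is the semicircle of radius $2\sqrt{\beta}$ matching the stated density. When $\beta\gg 1$ the $r^{-1}$ singularity of $\cos(kr)/(kr)$ dominates: arbitrarily small separations $r_{ij}$ produce a heavy power-law tail in the entry distribution, the second moment diverges, and the self-consistent equation becomes that of a Cauchy/L\'evy ensemble whose stable fixed point yields the Lorentzian density $1/[\pi(1+x^2)]$.

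The main obstacle in both cases is the rigour of the closure, not the final algebra. For $S_N$ the vectors $\mathbf{v}(\hat{\mathbf{n}})$ are deterministic functions of the random positions and are \emph{not} independent across directions, so the classical MP theorem does not literally apply; one must show that the neglected angular correlations are $o(1)$ in the bulk, equivalently that the effective mode count is sharp, and this is exactly where the empirical constant $2.8$ enters and where full rigour is replaced by an asymptotic and numerical match. For $C_N$ the heavy-tailed entries push the analysis outside finite-variance random-matrix theory into stable-law territory, and establishing that the cavity equation remains self-averaging under the $1/r$ singularity, with the crossover between the semicircle and Cauchy fixed points governed solely by $\beta$, is the genuinely delicate point. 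I would therefore present the derivation at the level of self-consistent equations and corroborate each limiting law numerically, as is standard for Euclidean random matrices.
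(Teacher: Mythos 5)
First, a point of reference: the paper does not actually prove this proposition---it declares the proof beyond its scope, defers to the cited work of Skipetrov and Goetschy, and supports the claim only with numerical experiments. Your proposal can therefore only be judged against that cited derivation, and for $S_N$ it is a faithful reconstruction of it: the angular-spectrum identity $\sin(kr)/(kr) = \frac{1}{4\pi}\int_{S^2} e^{jk\hat{\mathbf{n}}\cdot\mathbf{r}}\,d\Omega$, the resulting Wishart-type structure, and the diffraction-limited mode count $M \approx (kL)^2/2.8$ defining $\beta$ are exactly the ingredients there, including your honest admission that the constant $2.8$ is empirical and that the angular-decorrelation step is where rigour stops. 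Your observation about the diagonal is a genuine catch: with the paper's convention $G_{ii}=0$, the positive-semidefinite object is $S_N + I$, so the Mar\v{c}enko--Pastur law as written applies to the unit-diagonal sinc matrix, and the spectrum of the paper's $S_N$ is shifted by $-1$; the proposition glosses over this. (Likewise the stated semicircle density should read $\sqrt{4\beta - x^2}/(2\pi\beta)$, which your Wigner equation $g = 1/(-z-\beta g)$ implicitly corrects.)

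The genuine gap is your mechanism for the Cauchy regime of $C_N$. You attribute it to heavy-tailed entries with divergent second moment generated by close pairs. That is quantitatively false: for two independent uniform points in the box, $\operatorname{P}\left( |C_N(i,j)| > t \right) = \operatorname{P}\left( r_{ij} < 1/(kt) \right) \approx \frac{4\pi}{3 L^3 (kt)^3}$, a tail of index $3$, so the entries have finite variance, $\operatorname{E}[C_N(i,j)^2] \sim 1/(kL)^2$. An i.i.d.\ matrix with these margins would still obey the semicircle law, and L\'evy-matrix theory (which needs tail index below $2$, and index $1$ to produce a Lorentzian spectral density) is simply not in play. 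The Cauchy law at $\beta \gg 1$ instead comes from the kernel's Fourier structure, i.e., precisely from the correlations your ansatz discards: writing $C_N \approx H T H^{\dagger}$ with $H$ the $N \times M$ plane-wave matrix and $T$ the Fourier transform of $-\cos(kr)/(kr)$, one finds $T(\mathbf{q}) \propto \mathrm{PV}\, 1/(\Vert\mathbf{q}\Vert^2 - k^2)$, which is sign-indefinite and resonant on the shell $\Vert\mathbf{q}\Vert = k$---in contrast to the sinc kernel, whose transform is a positive measure concentrated on that shell (hence the Wishart/MP behavior). A smooth density of modes near the shell is pushed forward by $1/(\Vert\mathbf{q}\Vert^2 - k^2)$ to a law with index-$1$ (Cauchy) tails, and as $\beta \to \infty$ the empirical spectral distribution of $C_N$ tracks this distribution of $T$-values. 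So the heavy tail responsible for the Lorentzian lives in the mode spectrum of the kernel, not in the matrix entries; your cavity equation would have to be closed around that resonant structure, as in the cited reference, and as written that step of your argument fails.
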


The proof of this proposition is mathematically intricate and falls beyond the scope of this paper; detailed derivations can be found in \cite{skipetrov2011eigenvalue}. To illustrate the deterministic behavior exhibited by Euclidean random matrices in swarm antenna arrays, we conduct a numerical experiment under the following configuration: the carrier frequency is set to $f = 1 \ \text{GHz}$, corresponding to a wavelength $\lambda = 0.3 \ \text{m}$, and the number of antenna elements is fixed at $N = 8000$. The antennas are uniformly distributed within a cubic region with a side length of either $L=20 \ \text{m}$ or $L=40 \ \text{m}$, corresponding to different density and $\beta$ regimes. The empirical spectral distributions of the real and imaginary parts of the Euclidean random matrix are computed and compared with the corresponding limiting laws. The results, plotted in Figs.~\ref{Fig:EuclideanRMT} and \ref{Fig:EuclideanRMT2}, reveal excellent agreement with the Wigner semi-circle law and the Mar\v{c}enko-Pastur distribution under their respective regimes.

\begin{figure}[!htbp]
  \centering
  \subfigure[]{
  \label{Fig:EuclideanRMT:1}
  \includegraphics[width=0.48\columnwidth]{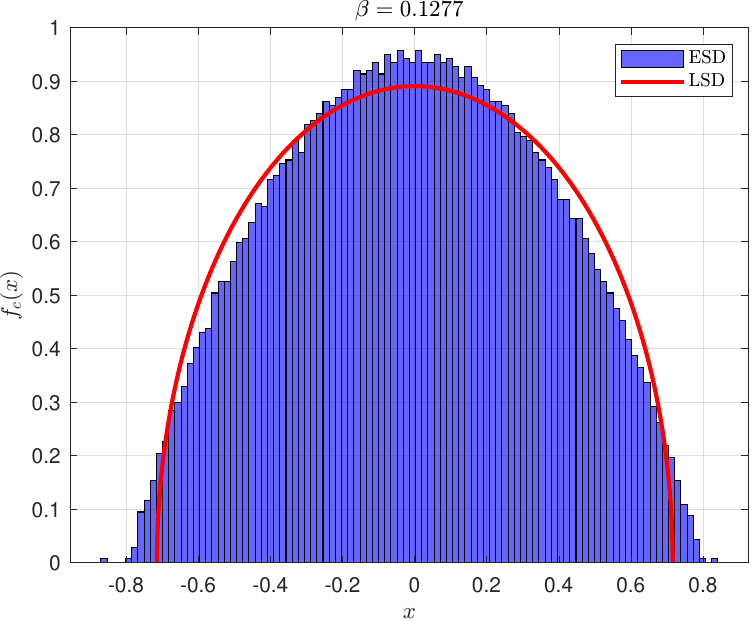}}
  \subfigure[]{
  \label{Fig:EuclideanRMT:2}
  \includegraphics[width=0.48\columnwidth]{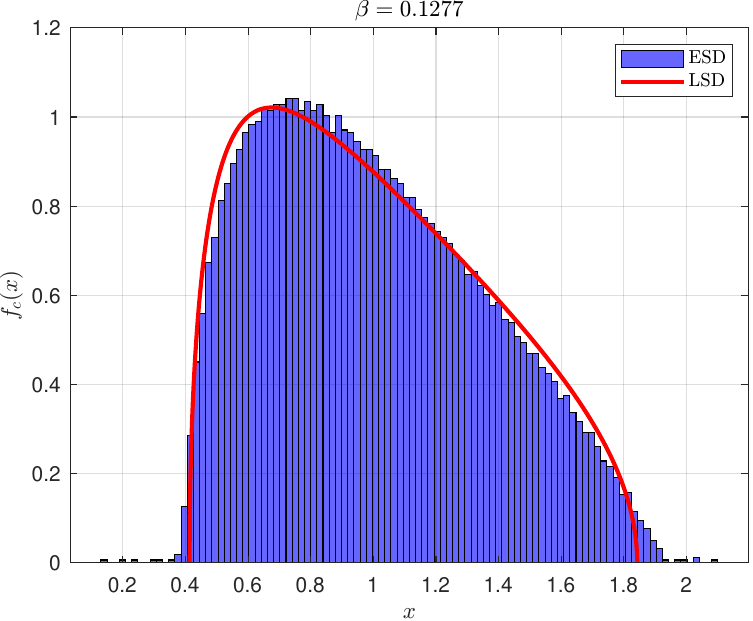}}
  \caption{Empirical and limiting spectral distributions of the real and imaginary parts of the Euclidean random matrix with $\lambda = 0.3 \ \text{m}$, $L=20 \ \text{m}$ and $N = 8000$, corresponding to $\beta = 0.1277$ and $\rho \lambda^3 = 0.027 \ll 1$. (a) Real part. (b) Imaginary part.}
  \label{Fig:EuclideanRMT}
\end{figure}

\begin{figure}[!htbp]
  \centering
  \subfigure[]{
  \label{Fig:EuclideanRMT2:3}
  \includegraphics[width=0.48\columnwidth]{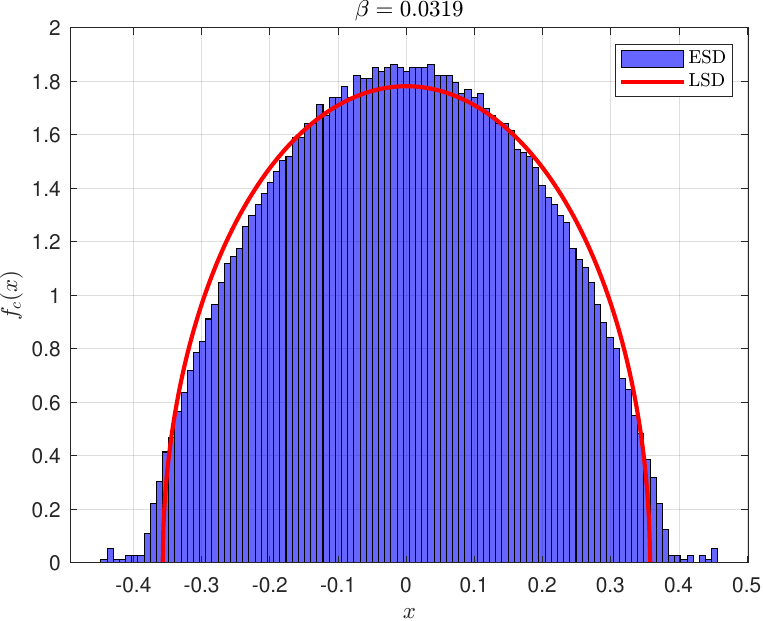}}
  \subfigure[]{
  \label{Fig:EuclideanRMT2:4}
  \includegraphics[width=0.48\columnwidth]{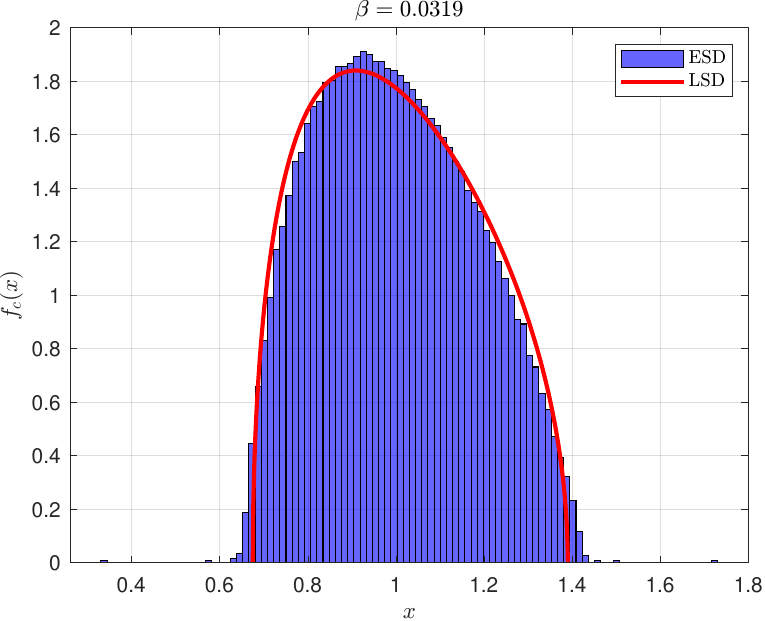}}
  \caption{Empirical and limiting spectral distributions of the real and imaginary parts of the Euclidean random matrix with $\lambda = 0.3 \ \text{m}$, $L=40 \ \text{m}$ and $N = 8000$, corresponding to $\beta = 0.0319$ and $\rho \lambda^3 = 0.0034 \ll 1$. (a) Real part. (b) Imaginary part.}
  \label{Fig:EuclideanRMT2}
\end{figure}

\section{Conclusion}\label{S:VI}

This work investigates the feasibility of coherent beamforming in swarm antenna arrays under both deterministic and stochastic conditions. We demonstrate that classical design constraints, such as half-wavelength spacing, can be relaxed without introducing grating lobes in multiple linear arrays. Our analysis shows that although spatial perturbations degrade the main lobe, coherent gain can be approximately maintained. Our analysis shows that while spatial perturbations degrade the main lobe, coherent gain can be approximately preserved. Furthermore, we identify emergent structural regularities in disordered arrays through the spectral convergence of Euclidean random matrices. These findings provide new theoretical foundations and practical design insights for enabling advanced functionalities in swarm antenna arrays.

\bibliographystyle{IEEEtran}
% \nocite{*}
\bibliography{References}

\end{document}